\newif\ifabstract
\newif\iffull
\theoremstyle{plain}
\newtheorem{theorem}{Theorem}[section]
\newtheorem{proposition}[theorem]{Proposition}
\newtheorem{lemma}[theorem]{Lemma}
\theoremstyle{definition}
\newtheorem{definition}[theorem]{Definition}
\theoremstyle{remark}
\newcommand{\DSName}{\texttt{RobustSL}\xspace}
\newcommand{\Exp}{\mathbb{E}}
\newcommand{\BaseClass}[1]{\texttt{class-base(#1)}\xspace}
\newcommand{\pz}{\theta}
\newcommand{\pg}{p}
\newcommand{\Oh}{\mathcal{O}}
\definecolor{Green}{rgb}{0.0, 0.4, 0.0}
\begin{document}
\title{Robust Learning-Augmented Dictionaries}

\author{
Ali~Zeynali\thanks{University of Massachusetts Amherst. Email: {\tt azeynali@cs.umass.edu}.} \and
Shahin~Kamali\thanks{York University. Email: {\tt kamalis@yorku.ca}.}  \and
Mohammad~Hajiesmaili\thanks{University of Massachusetts Amherst. Email: {\tt hajiesmaili@cs.umass.edu}.} \and
}

\begin{titlepage}
\maketitle

\begin{abstract}
We present the first learning-augmented data structure for implementing dictionaries with optimal consistency and robustness. Our data structure, named \DSName,
is a skip list augmented by predictions of access frequencies of elements in a data sequence. With proper predictions, \DSName has optimal consistency (achieves {static optimality}).
At the same time, it maintains a logarithmic running time for each operation, ensuring optimal robustness, even if predictions are generated adversarially. Therefore, \DSName has all the advantages of the recent learning-augmented data structures of
Lin, Luo, and Woodruff (ICML 2022) and Cao et al. (arXiv 2023), while providing robustness guarantees that are absent in the previous work. Numerical experiments show that \DSName outperforms alternative data structures using both synthetic and real datasets.
\end{abstract}

\end{titlepage}

\section{Introduction}
 Dictionaries are one of the most studied abstract data types with a wide range of applications, from database indexing~\cite{BayerM72} to scheduling in operating systems~\cite{Rubini}. In the dictionary problem, the goal is to maintain a set of $n$ \emph{items}, each represented with a \emph{key}, so as to minimize the total \emph{cost} of processing an online data sequence of operations, each involving access, insertion, deletion, or other operations such as successor and range queries. Here, the cost refers to the number of comparisons made for all operations, and \emph{amortized cost} is the average cost of a single operation over the input sequence. 

{There are several data structures to implement dictionaries.} 
Hash tables are {efficient} and practical data structures that are useful in many applications. {However, hash tables 
do not efficiently support operations like successor and rank/select queries}~\cite{LinLWood22}. In comparison, binary search trees (BST) and skip lists keep items sorted and thus support queries that involve ordering items with minimal augmentation. Classic BSTs such as red-black trees, AVL trees, treaps, and classic skip lists support dictionary operations in $\Oh(\log n)$ time, which is optimal for a \emph{single} operation. For a \emph{sequence} of $m$ operations in a data stream, however, these structures are sub-optimal as they do not react to the access patterns in the input. 
For example, an input may be formed by $m$ requests to a  leaf $i$ of a balanced BST, giving it a total cost of $\Theta(m \log n)$, while 
a solution that first moves $i$ to the root has a 
cost of $\Theta(m + \log n)$. 
The same argument can be made by making 
requests to a key replicated once at the deepest level of a skip list.
Splay trees~\cite{SleatorT85} are self-adjusting BSTs that move each accessed item to the tree's root via a
splay operation. 
Splay trees are \emph{statically optimal}, meaning their cost is proportional to an optimal data structure that does not self-adjust.
Nevertheless, as pointed out by~\citet{LinLWood22} and~\citet{ChenChen2023V2}, the constant multiplicative overhead involved in pointer updates in the splay operations makes them impractical in many applications. 

In recent years, there has been an increasing interest in augmenting algorithms that work on data streams with machine-learned predictions. 
The objective is to design solutions that provide guarantees with respect to \emph{consistency}, the performance measure when predictions are accurate, and \emph{robustness}, the performance measure when predictions are adversarial. 
For dictionaries, ~\citet{LinLWood22} presented an augmented treap data structure that uses frequency predictions. In a classic treap, each item is assigned a random priority that defines its location in the tree. In the work of~\citet{LinLWood22}, these random priorities are replaced with machine-learned frequency predictions. 
Under the ``random-order rank'' assumption, which implies all keys have the same chance of being the $i$'th frequently asked key (for any $i\leq n$), the resulting data structure 
offers static optimality with accurate predictions (is optimally consistent) and is
robust in the sense that the expected cost of all operations in $\Oh(m \log n)$. 
When the ranks are not random, however, the resulting data structure is not consistent nor robust (\S\ref{sect:learningTreaps}, Proposition~\ref{prop:ICML}). 
~\citet{ChenChen2023V2} have recently proposed an alternative priority assignment that results in a learning-augmented treap that is 
statically optimal with accurate predictions without the random-order rank assumption. Nevertheless, we show that their structure is also vulnerable when predictions are adversarial and thus is not robust (\S\ref{sect:learningTreaps}, Proposition~\ref{prop:chenchen}).

\paragraph{Contributions.}
This work introduces \DSName, a learning-augmented data structure that utilizes frequency predictions and achieves optimal consistency and robustness. Our primary contributions are delineated as follows:
\begin{itemize}[leftmargin=5mm]

\item In \S~\ref{sec:robustSL}, we presented \DSName, a skip list that replicates each item, given its predicted frequency, based on two factors. First, a deterministic factor is calculated based on the predicted frequency of the item using a classification approach that ensures items with similar predicted frequencies receive a comparable number of replicas within the skip list. In particular, including this factor guarantees that items with higher predicted frequencies obtain more expected number of replicas, thereby optimizing the structure's performance. The second factor is a randomized factor added to the deterministic one. The intuition behind this factor is similar to classic skip lists: adding a random ``noise'' to the number of replicas to ensure efficient list navigation when searching for an item. The deterministic replication process ensures consistency, whereas the stochastic mechanism fortifies the robustness of \DSName.

\item 
We then analyze the consistency and robustness of \DSName. Theorem~\ref{thm:rsl_consistency} establishes consistency of the \DSName by revealing that under perfect frequency predictions, the anticipated access cost for an item $i$ within \DSName is proportional to the logarithm of its predicted frequency, which is a consequence of the deterministic replication mechanism. This relationship confirms that the access cost for an input sequence is a constant factor away from the entropy of the input, thus proving the optimal consistency of \DSName. Additionally, our analysis demonstrates that the maximum cost of access to any item in \DSName remains within $\Oh(\log n)$, substantiated by the stochastic replication mechanism), solidifying the structure's assured optimal robustness (Theorem~\ref{thm:rsl_robustness}). 

\item In~\S\ref{sect:exp}, we conduct comprehensive experimental comparisons between \DSName and other dictionary data structures using synthetic and real datasets. Our experimental findings align closely with theoretical analysis, demonstrating that data structures such as learned treaps exhibit satisfactory performance
for inputs with highly accurate predictability. However, their performance is significantly worse when predictions are erroneous.
In comparison, \DSName achieves comparable performance with accurate predictions and remains robust even for highly erroneous predictions.

\end{itemize}

\section{Preliminaries}
We use $[n]$ to denote the set $\{1,2,\ldots,n\}$ and assume keys in the dictionary come from universe $[n]$. 
We use $m$ to denote the number of operations in the input stream. 
We also let $f_i$ denote the frequency of operations involving key ${i\in[n]}$ in the input, and ${\hat{ f_i}}$ denote the predicted frequency for queries to $i$ ($\sum_i f_i = \sum_i \hat{f_i} = 1$). We let $\bm{f}$, and $\bm{\hat{f}}$ denote the vectors of actual and predicted frequencies. 
Let $e_i \in [n]$ be the item with {\em rank} $i$; that is, the $i$'th most frequently-accessed item in the input.
Finally, we use $\texttt{Operation}_D(i)$ to denote the number of comparisons for applying an operation (the ``\texttt{Operation}'') involving key $i\in[n]$ in a data structure $D$. For example, $\texttt{Search}_D(i)$ is the number of comparisons for accessing $i$ in $D$.
\subsection{Dictionaries and Optimality}
Consider an online stream of operations concerning a set of $n$ items, forming a dictionary. The operations mainly 
%
%
require accessing, inserting, and deleting items, while secondary operations such as rank, select, range-query, successor, and predecessor may need to be supported. 
\emph{Comparison-based} data structures, such as BSTs and skip lists, keep data in order and thus support all operations without relying on restrictive assumptions on the input distribution.
The cost of these data structures for a given input stream can be measured by the number of comparisons they make for all operations in the input; other costs, such as pointer updates, are proportional to the number of comparisons. 
Most comparison-based data structures, such as balanced BSTs and skip lists, 
do not change structure after access operations. \emph{Self-adjusting} data structures such as splay trees~\cite{SleatorT85}, on the other hand, 
require extra comparisons for self-adjusting and thus have a constant-factor overhead, which is not desirable in many applications~\cite{LinLWood22}.

\citet{Mehlhorn75} showed that the number of comparisons made by any comparison-based data structure {that does not self-adjust}, for the input of $m$ operations and frequency distribution $\bm{f}$, is at least $m H(\bm{f})/\sqrt{3}$, where $H(\bm{f})$ is the {\em entropy} of $\bm{f}$  defined as $H(\bm{f}) = -\sum_{i \in[n]} f_i\log(f_i).$  
Therefore, one can use entropy as a reference point for measuring the consistency of learning-augmented structures. In particular, we say that a structure is {\em statically optimal} if its cost for {any} instance $\mathcal{I}$ is $\Oh(H(\bm{f}))$. 
In particular, the weight-balanced BST of~\citet{Knuth71} and its approximation by \citet{Mehlhorn75} are statically optimal.

To measure {robustness}, i.e., performance under adversarial error, 
we note that, regardless of the structure comparison-based structure, there are access requests that require at least $\lceil \log n \rceil$ per access due to the inherent nature of binary comparisons. Therefore, when predictions are adversarial, there are worst-case input sequences in which every access request takes at least $\lceil \log n\rceil$ comparisons ($\Omega(m \log n)$ cost for a sequence of $m$ operations). 
From the above discussions, we define our measures of optimality as follows.

\begin{definition}[Optimal consistency and robustness]
An instance $\mathcal{I} = (\sigma,\bm{f}, \hat{\bm{f}})$ of the dictionary problem with prediction includes a sequence $\sigma$ of $m$ operations on $n$ keys, an unknown vector $\bm{f}$ specifying frequency (probability) of keys in $\sigma$, and a known vector $\hat{\bm{f}}$ of predicted frequencies. 
A learning-augmented data structure has consistency $c$ iff its total number of comparison is $c \cdot H(\bm{f})$ 
for any input instance with $\bm{f} = \hat{\bm{f}}$. In particular, it is \emph{optimally consistent}, or statically optimal, if its consistency is $\Oh(1)$. 
Similarly, a learning-augmented data structure is  $r$-robust iff its total cost 
is at most $r.m$ for any instance. In particular, it is \emph{optimally robust} iff it has robustness $\Oh(\log n)$ for a dictionary with $n$ items.
\end{definition}

Consistency and robustness are not always inherently conflicting attributes, yet achieving one often necessitates compromising the other, as observed in prior research \cite{DBLP:conf/innovations/0001DJKR20,ijcaiKS22,Zeynali0HW21,LiYQSYWL22,sun2021pareto}. Attempts to attain optimal consistency have often resulted in a trade-off that undermines robustness and vice versa. Within the literature \cite{Knuth71,ChenChen2023V2,LinLWood22}, several data structures have been proposed with an emphasis on optimizing consistency, showcasing exceptional performance under ideal predictive conditions. However, this pursuit of high consistency tends to render these structures less robust when faced with adversarial predictions. Conversely, other data structures, such as balanced BSTs and AVL tree, prioritize bolstering their robustness, particularly under worst-case scenarios, thereby offering enhanced resilience. 
 In light of these results, one might ask whether it is possible to get the best of the two worlds: optimal consistency and robustness at the same time. We answer this question in the affirmative in this paper.

\subsection{Treaps and Skip Lists}
A treap is a binary search tree where items has additional field which is its priority. In addition to binary search tree property, a treap follows the heap property.
In a classic {\em random treap}~\cite{SeidelA96}, priorities are assigned randomly, allowing them support dictionary operations in expected $\Oh(\log n)$ time, i.e., they are optimally robust. On the other hand, random treaps are not optimally consistent due to a lack of information about access frequencies.
When frequency predictions are available, it is natural to define priorities based on frequencies rather than randomly. These ideas were explored in~\cite{LinLWood22,ChenChen2023V2} to define learning-augmented treaps (see \S\ref{sect:learningTreaps}). 

Skip lists, introduced by~\citet{Pugh90}, are randomized data structures that provide all functionalities of balanced BSTs on expectation. A skip list is a collection of sorted linked lists that appear in \emph{levels}, where all dictionary items are present in the lowest level, and a subset of items at each level are selected randomly and independently to be replicated in the above layer. The random decisions
follow a geometric distribution, ensuring an expected $\Oh(1)$ replicas per key. Each replica has a ``right pointer'' to the next item in the its linked-list, and a ``bottom pointer'' to the replica of the same item in the level below. The top list is said to have {level} 1, and the level of any other list is the level of its top list plus~1. The ``height'' of skip lists, is defined as the number of levels in it, and ``depth'' of a specific item, is the least level number at which the item appears.

Previous work has established a tight relationship between skip lists and trees. For example, a skip list can simulate various forms of multiway balanced search trees (MWBSTs) such as B-trees, and (a,b)-trees~\cite{MunroPS92,MehlhornN92,BagchiBG05}. On the other hand, \emph{Skip trees}~\cite{Messeguer97} are MWBSTs that certify one-to-one mapping between MWBSTs and skiplists. It is also possible to represent any skip list with a BST~\cite{DeanJ07,BoseDL08}. 

\subsection{Learning-augmented Treaps}\label{sect:learningTreaps}

In the learning-augmented treaps proposed by~\citet{LinLWood22}, priorities are not random but instead predicted frequencies.
Under a {\em random-order rank assumption}, which requires that items' {ranks} to be a random permutation of $[n]$, these treaps are optimally-consistent and have bounded robustness.
However, these treaps are {not optimally consistent} nor robust when the random-order rank assumption does not hold. In particular, when the predicted frequency distribution is highly skewed, the resulting treap may resemble a linear list, which is clearly neither robust nor consistent, as shown in the following proposition.

\begin{proposition}[Appendix~\S\ref{app:learned_treap_performance}]
\label{prop:ICML}
Without random-order rank assumption, the consistency of learning-augmented treap of~\citet{LinLWood22} for dictionaries of size $n$ is at least $\Omega(n /\log n)$, and its robustness is $n$.
\end{proposition}

~\citet{LinLWood22} present a method for relaxing the random-rank assumption by bijecting each key in the dictionary to a random key in a secondary dictionary maintained by a treap based on predicted frequencies.
This simple solution, however,  shuffles the keys, and the resulting tree is not ordered.
Therefore, one cannot efficiently answer secondary dictionary such as predecessor and successor operations using these shuffled treaps.

Recently, ~\citet{ChenChen2023V2} introduced a different learning-augmented treap, in which a random element is introduced in priority assignment. Precisely, the priority of the node with key $i$ is defined as $\texttt{pri}_i = \delta_i - \lfloor \log_2 \log_2 1/\hat{f}_i \rfloor, \ \  \delta_i \sim U(0,1),$  where $U(0,1)$ is the uniform distribution over interval $[0, 1]$.  
The total number of comparisons is proved to be $\Oh(m \cdot \texttt{Ent}(\bm{f},\bm{\hat{f}}))$, where $\texttt{Ent}(\bm{f},\hat{\bm{f}})$ is the \emph{cross entropy} between $\bm{f}$ and $\hat{\bm{f}}$ defined as $\texttt{Ent}(\bm{f},\hat{\bm{f}}) = - \sum_{i\in[n]} f_i \log \hat{f}_i$. When $f_i = \hat{f}_i$, it holds that $\texttt{Ent}(\bm{f},\hat{\bm{f}}) = H(\bm{f})$, and the total number of comparisons will be $\Oh(m H(\bm{f}))$, {ensuring} optimal consistency  for these treaps. However, when predictions are adversarial, these treaps are not robust, as shown in the proposition below.

\begin{proposition}[Appendix~\S\ref{app:chenchen_proof}]
\label{prop:chenchen}
    The learning-augmented treaps of~\cite{ChenChen2023V2} have optimal consistency (are statically optimal), but {their robustness is $n$}.
\end{proposition}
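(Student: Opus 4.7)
The consistency half is immediate: substituting $f = \hat{f}$ into the cross-entropy bound of Chen and Chen cited just above gives total cost $\Oh(m H(f))$, matching Mehlhorn's entropy lower bound and therefore establishing static optimality. The substantive task is to exhibit an adversarial instance on which every access needs $\Omega(n)$ comparisons.

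The plan is to pick $\hat{f}$ so that the integer term $\lfloor \log_2 \log_2 1/\hat{f}_i \rfloor$ is strictly monotone in the key $i$, forcing the priority ordering to align with the key ordering. The cleanest choice is $\hat{f}_i \propto 2^{-2^i}$, normalized to sum to one, which yields $\lfloor \log_2 \log_2 1/\hat{f}_i \rfloor$ values that differ by roughly one between consecutive keys, and hence $\texttt{pri}_i \approx \delta_i - i$. Since each $\delta_i$ lies in the open interval $(0,1)$, the random perturbation cannot bridge an integer gap of one: with probability one, key $i$ has priority strictly greater than key $i+1$ for every $i$.

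With priorities monotone in key order, the heap-plus-BST constraint forces the treap into a right spine of length $n$, so key $n$ sits at depth $n$. I then take the true distribution to place all mass on key $n$ (i.e., $f_n = 1$), so every one of the $m$ accesses traverses the full spine at cost $\Theta(n)$, giving a total cost of $\Theta(mn)$, and hence robustness $\Omega(n)$, matching the trivial upper bound of $n$. The main technical wrinkle I anticipate is controlling the floor function after normalization so that the gap in the floor values between consecutive keys is genuinely at least one; if the basic construction is too tight for small $i$, I would simply space the $\hat{f}_i$ more aggressively (for example, $\hat{f}_i \propto 2^{-2^{2i}}$) to enforce this without needing delicate arithmetic.
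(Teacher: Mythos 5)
Your proposal is correct and follows essentially the same route as the paper: the consistency half is delegated to Chen and Chen's cross-entropy bound, and the robustness lower bound uses a doubly-exponentially skewed prediction vector (the paper's condition $\log\log \hat{f}_i \geq \log\log \hat{f}_{i-1}+1$ is exactly your $\hat{f}_i \propto 2^{-2^i}$ construction) so that the unit-bounded random perturbation cannot reorder priorities and the treap degenerates into a spine of depth $n$. Your write-up is somewhat more explicit than the paper's sketch --- in particular in naming the true distribution concentrated on the deepest key and in flagging the normalization issue with the floor --- but these are refinements of the same argument rather than a different approach.
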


Intuitively, regardless of the amount of randomness (noise) added to the priorities, 
an adversary can skew the predicted frequency distribution to negate the added random noise. For that, it suffices to define $\hat{f}_i$'s in a way to ensure ${\hat{f}_i \geq \hat{f}_{i-1} + u}$, where $u$ is the upper bound for the random variable added to the priority.

In other words, regardless of the priority assignment, the treap will be either fully random by ignoring predictions, hence non-consistent, or not robust against adversarial predictions. \textit{This suggests that, in order to achieve optimal consistency 
(static optimality) and robustness simultaneously, one must consider data structures other than treaps.}

\section{Consistent and Robust Dictionaries}\label{sect:mainDic}
 This section presents our main results on learning-augmented dictionary data structures: a skip list that achieves optimal consistency (statically optimal) when predictions are correct and offers robustness of $\Oh(\log n)$ when predictions are adversarial. 

 If one wants to maintain a \emph{static dictionary}, without the support of insertions and deletions, it is rather easy to achieve a consistent and robust data structure as follows. Provided with predicted frequencies $\hat{\bm{f}}$, first, build the static optimal tree of \citet{Knuth71} in $\Oh(n^2)$; call the resulting tree \emph{optimistic tree} $T_o$. Also, form a balanced BST $T_o$, referred to as \emph{pessimistic tree}. To access an item with key $i$, we simultaneously search for $i$ in $T_o$ and $T_p$. When one tree finds the item, we immediately stop the searching process of another tree. The number of comparisons would be $2\min(\text{Search}_{T_o}(i), \text{Search}_{T_p}(i))$, ensuring both static optimality and optimal robustness of $O(\log n)$.

However, this simple data structure does not support insertion and deletion queries, as Knuth's static optimal tree lacks support for these operations. In practice, the predicted frequencies get updated frequently, and such updates are implemented by deletion and insertion, necessitating a \emph{dynamic dictionary that supports insertions and deletions}. Moreover, the time complexity of constructing the optimistic tree is expensive, rendering it impractical for widespread use.

In what follows, we first present \DSName in Section~\ref{sec:robustSL}. In Section~\ref{sec:theory}, we show that \DSName achieves optimal consistency and robustness. Last, in Section~\ref{sec:ext}, we discuss extensions, such as how \DSName can be augmented to efficiently support secondary operations like the predecessor, successor, rank, select, and range queries.

\subsection{\DSName, Consistent and Robust Skip list}
\label{sec:robustSL}
\DSName is a skip list, and many of its functionalities are similar to a regular skip list. In particular, upon insertion of an item with key $i$, it replicates the item in a certain number of levels (each associated with a linked list) starting at the lowest level.
Unlike a regular skip list, where the replication strategy is purely randomized, \DSName involves predicted frequencies in its replication strategy. As we will show, the number of replicas for item with key $i$ in \DSName is a function of its predicted frequency and a geometric random variable.
The key to achieving optimal consistency and robustness in \DSName lies in the precise classification of items based on their predicted frequencies.
The goal is to assign more replicas (lower ``depths'') to items with higher predicted frequencies, while items with similar predicted frequencies share the same expected depth, ultimately reinforcing the consistency of \DSName. Importantly, \DSName ensures that maximum number of comparison made while searching any item is at most $\Oh(\log n)$, ensuring optimal robustness.

Before presenting \DSName formally, we present some intuitions behind it via a simpler data structure that illustrates the main ideas behind \DSName.

\paragraph{High-level intuitions and ideas.}

Let's assume a number of items in the dictionary, $n$, is fixed (this assumption will be relaxed later in Section~\ref{sec:ext}).
We classify items such that items with predicted frequency larger than 1/2 belong to the first class (index 0), those with predicted frequency in $(1/4,1/2]$ belong to class index 1, and more generally, items with predicted frequency in $(1/2^{2^{c+1}}, 1/2^{2^{c}}]$ belong to class $c$. We further limit the number of classes into $K = \lceil \log \log n \rceil$; that is, items with frequency smaller than $1/n$ belong to class $K$. Maintain a separate skip list for items of each class, and, to search for any item with key $i$, first examine the skip list maintained for the first class, and in case of not finding $i$, we examine the skip list for classes $1,2,\ldots, K$ in the same order.

We show that searching for an item in class $c$ takes no more than $\alpha \cdot 2^{c+1}$ comparisons on expectation for some constant $\alpha$. To see that, let $N_c$ denote the number of items in class $c$.
Observe that at most $2^{(2^{c+1})}$ items belong to class $c$ (i.e., $N_c \leq 2^{(2^{c+1})}$); otherwise, the total predicted frequencies for items in class $c$ exceeds 1.
Therefore, searching for an item in the skip list of class $c$ takes the expected time of $\alpha\log N_c \leq \alpha 2^{c+1}$. Now, if $i$ is assigned to class index $c$, the total expected number of comparisons for finding $i$ would be $\sum_{c'\leq c} \alpha 2^{c'+1} < \alpha 2^{c+2}$. 
The robustness of the data structure is direct from the fact that $c \leq K = \lceil \log \log n\rceil$, and thus searching any item takes at most $4\alpha2^K = O(\log n)$. For consistency, when predictions are accurate,
we note that if an item with key $i$ is assigned to class $c$, then it holds that ${f_i} = {\hat{f}_i} \leq {2^{-2^{c}}}$, which is direct from the way classes are defined. Thus, the expected number of comparisons for finding $i$ would be $\alpha \cdot 2^{c+2} \leq -4\alpha \cdot \log {\hat{f}_i}$.
We can conclude that searching for $i$ takes $O(-\log f_i)$, proving our algorithm's consistency. 

\DSName improves the above data structure in a few ways. First, it maintains a \emph{single} skip list for all classes, which is necessary for efficient handling of frequency updates: when an item's frequency is updated, it is desirable to update the number of replicas in a single skip list rather than removing them from one skip list and adding to another. To maintain a single skip list, we ensure that items of class $c$ form a ``layer'' above those of class $c+1$; this is achieved via defining a ``class-base'' for each class, ensuring a minimum number of replicas for class $c$. Second, instead of using powers of $1/2$ in the classification, \DSName uses a parameter $\pz$, which is set to optimize the constant for static optimality.

\paragraph{Algorithmic details of \DSName.}
Similarly to the data structure described above, \DSName classifies items based on their predicted frequencies and employs a two-step process to select the number of replicas for each item. Firstly, it replicates any item in class $c$ a minimum of $\mathcal{D}(K) - \mathcal{D}(c)$ times, where $\mathcal{D}(c)$ is an increasing function of $c$, indicating the maximum possible depth of items in class $c$. Given this definition, $\mathcal{D}(K) - \mathcal{D}(c)$, called \BaseClass{c}, determines the minimum number of replicas for items in that class. Intuitively, items in class~0, which are predicted to appear more in the input, have a higher class-base, meaning that they are replicated more than other classes; as the indices of classes increase, their class-base decreases, implying less replication. See Figure~\ref{fig:rsl_structure} for an illustration.

In addition to replicas specified by the class-base, more
replications are introduced for each item, guided by a geometric distribution with a parameter of $1-\pg$, where $\pg < 1$ is a constant value. In the next section, we present the consistency and robustness of \DSName as a function of $\pg$, and show how this parameter impacts those metrics. This stochastic process for adjusting the number of replicas ensures that the expected search cost within a class is logarithmic to the number of items in that class. The following sections explore the classification and replication processes within \DSName in detail.

\begin{figure}[!t]
    \centering
    \includegraphics[width=0.7\linewidth]{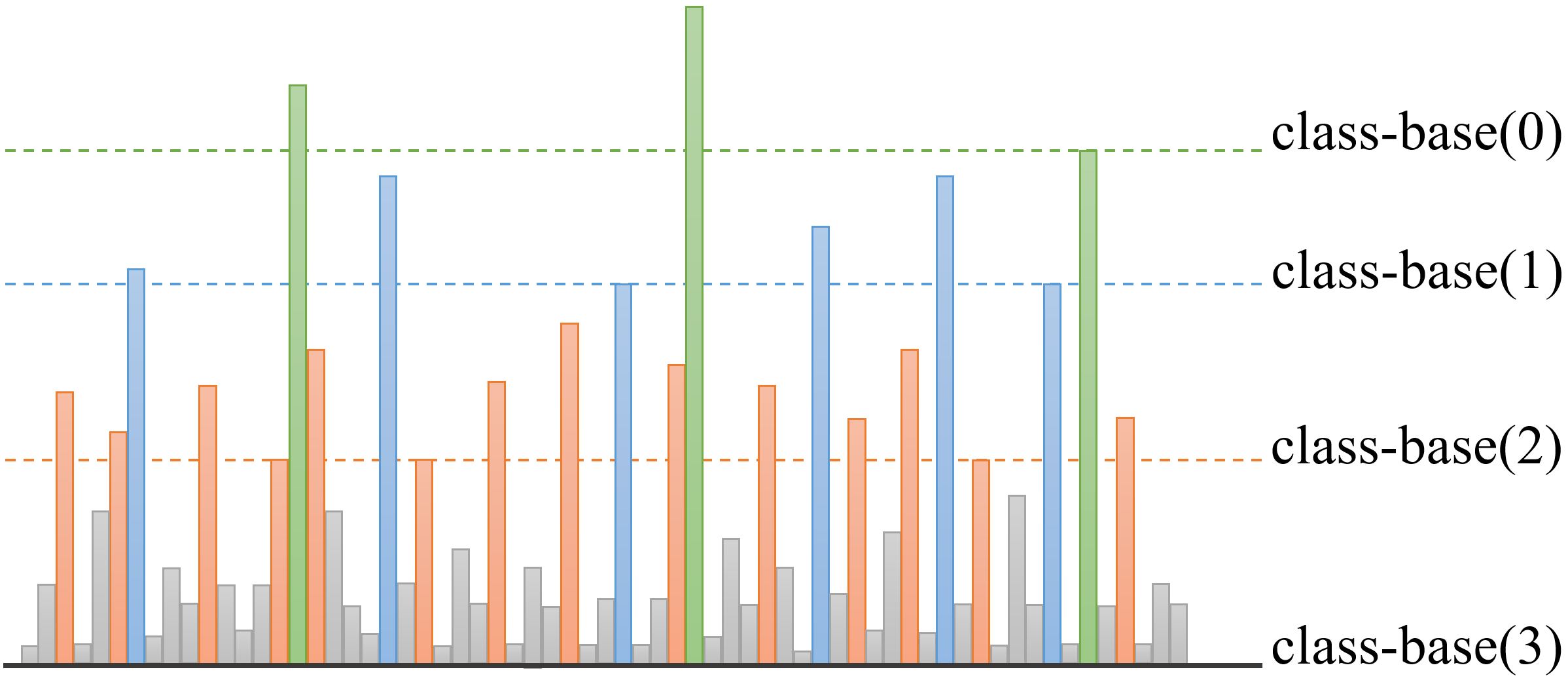}
    \caption{Structure of \DSName having four classes. Items in lower-indexed classes replicate more. The replication of items within a class is achieved through a stochastic process.}
    \label{fig:rsl_structure}
\end{figure}

\textit{Classification approach.} The classification of items relies on their predicted frequencies. An item $i$ with a predicted frequency $\hat{f}_i$ is assigned to class $c_i (\geq 0)$ as follows. First, items with predicted frequency $\geq \theta$  belong to class $0$. Other items belong to class $c_i \geq 1$, if their predicted frequency satisfies the condition: 
\begin{equation}
\label{eq:classification} 
\pz^{2^{c_i}} \leq \max(\hat{f}_i, n^{\log(\pg)/2}) < \pz^{2^{c_i - 1}},
\end{equation}
which gives
\begin{equation*}
    c_i = \lceil \log\bigg( \frac{-1}{\log \pz}\min\bigg(-\log \hat{f}_i, \frac{-\log p \log n }{2} \bigg) \bigg) \rceil,
\end{equation*}
 
where  $\pz \in (0,1)$ is an algorithm parameter. 
In particular, the largest value of $c_i$ is realized for items with small predicted frequency, where $-\log \hat{f_i} \leq -\log p \log (n)  /2$. These items (if exist) belong to the class index denoted by $K$, where $$K = 1 + \lceil
 \log \log n - \log(2 \log \pz / \log p )\rceil.$$
Intuitively speaking, parameter $\pz$ determines the predicted frequency range for items in the same class. A smaller value of $\pz$ widens the range, which implies fewer classes, while higher $\pz$ narrows it, resulting in more classes. Regardless, the number of classes is bounded by $K+1 = \Theta(\log \log n)$.

\textit{Number of replicas.}
To determine the number of replicas for an item upon its insertion, \DSName first calculates the class-base of the item, which is deterministically defined based on the class of item, $c$, and is calculated as $\mathcal{D}(K)-\mathcal{D}(c)$. We recursively define $\mathcal{D}(c)$ as follows. First, for $c=0$, we let
$\mathcal{D}(0) = \lceil \log \pz /\log \pg \rceil$, and
for any $c > 0$, we define:
\begin{align}
\mathcal{D}(c) = \mathcal{D}(c - 1) +  \lceil \frac{\log \pz}{\log \pg}2^c \rceil.
\end{align}
Intuitively, $\mathcal{D}(c)$ represents the maximum depth of items belonging to class $c$, and the above definition ensures that items in classes larger than $c$ appear at least  ${2^c\log \pz}/{\log \pg} $ levels deeper than items of class $c$, on expectation. 
The number of  replicas for an item $i$ in class $c_i$ is {calculated as}:
\begin{align}
\label{eq:rep_i}
{h_i = \mathcal{D}(K) - \mathcal{D}(c) + \mathcal{X}(\pg),}
\end{align}
where $\mathcal{X}(\pg)$ is a random variable following a geometric distribution with parameter $1 - \pg$. For example, items of class $K$ (if there is any), which are the ones with the smallest predicted frequencies, are replicated only based on stochastic replications, $\mathcal{X}(\pg)$, ensuring that they are at the deepest levels of the skip list. As another observation, for any $c < c'$, items of class $c$ are replicated in $D(c')-D(c)$ more levels, on expectation, than items of class $c'$. This is consistent with the higher predicted frequencies for items of class $c$. Finally, note that items that belong to the same class all have the same expected number of replicas, ensuring that the data structure resembles a regular skip list for items within the same class.

\subsection{Theoretical Analysis of \DSName}
\label{sec:theory}

In what follows, we analyze \DSName, providing upper bounds on the number of comparisons made when accessing any item. In particular, we establish the optimal consistency and robustness of \DSName. Let $c \in [0,K]$ be any class defined by \DSName; we use $N_c$ to denote the number of items belonging to class $c$. 
If $ c \leq K-1$, then for any item $i$ in class $c$, it holds that $\hat{f}_i \geq \theta^{2^c}$. Consequently, it follows that $N_c \cdot \theta^{2^{c}} \leq 1$, as violating this condition would contradict the requirement that predicted frequencies form a probability distribution ($\sum_i \hat{f}_i = 1$).
Thus, we deduce that $N_c \leq \theta^{-2^{c}}$, leading to the inequality $\log N_c \leq -2^{c} \log \theta$.

The following lemma presents an analysis of the classification and replication mechanisms within \DSName, illuminating that beyond the inherent relationship between higher predicted frequencies and greater heights, items belonging to distinct classes are anticipated to possess varying numbers of replications.

 \begin{lemma}[Appendix~\S\ref{app:proof_low_prob}]
    \label{lem:cond_low_prob}
    For any class index $c > 0$, the expected number of items of class $c$ that are replicated at least \BaseClass{c-1} times is less than 1.
\end{lemma}

 This lemma indicates that, while searching for a key $i$ categorized within a specific class, $c_i$, the expected number of comparisons conducted with items in higher-indexed classes, $c_j > c_i$, is significantly lower than the expected comparisons with items in class $c \leq c_i$. This property helps \DSName to access more frequent items with a lesser number of comparisons. Using this lemma, we can provide an upper bound from the number of comparisons made while searching for an item $i$.

\begin{lemma}[Appendix~\S\ref{app:proof_shahinMain}]\label{lem:shahinMain}
The expected number of comparisons made while searching an item $i$ in class $c_i < K$ is at most $\frac{4}{p \log{p}} \log(\hat{f}_i)$. For an item $i$ in class $c_i = K$, the expected number of comparisons is at most $\frac{4\log \theta-1}{p \log p} (\log n) + \mathcal{O}(1) $.
\end{lemma}
The above lemma bounds the cost of access to any item based on their predicted frequencies. In the following, we establish 
the consistency of \DSName, which follows from Lemma~\ref{lem:shahinMain}, noting the number of comparisons for an item $i$ of frequency ${f}_i = \hat{f}_i$ is proportional to $\log(f_i)$ for all classes, including class $K$.

\begin{theorem}[Appendix~\S\ref{app:proof_rsl_consistency}]
\label{thm:rsl_consistency}
    The consistency of \DSName can be established as $\frac{-4}{\pg\log(\pg)} = \Oh(1)$, ensuring the static optimality for \DSName.
\end{theorem}

The following lemma is direct from Lemma~\ref{lem:shahinMain}, by noting that the maximum number of comparison made while searching for items belong to class $K$.
\begin{theorem}[Appendix~\S\ref{app:proof_rsl_robustness}]
    \label{thm:rsl_robustness}
    The maximum cost of searching for items within \DSName is $\Oh(\log n)$, which makes \DSName optimally-robust.
\end{theorem}

\begin{figure*}[!t]
	\centering
    \subfigure{\includegraphics[width=0.96\textwidth]{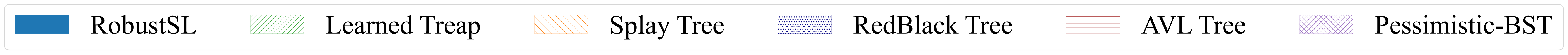}}\vspace{-4mm}
    \subfigure{\includegraphics[width=0.32\textwidth, height=3.3cm]{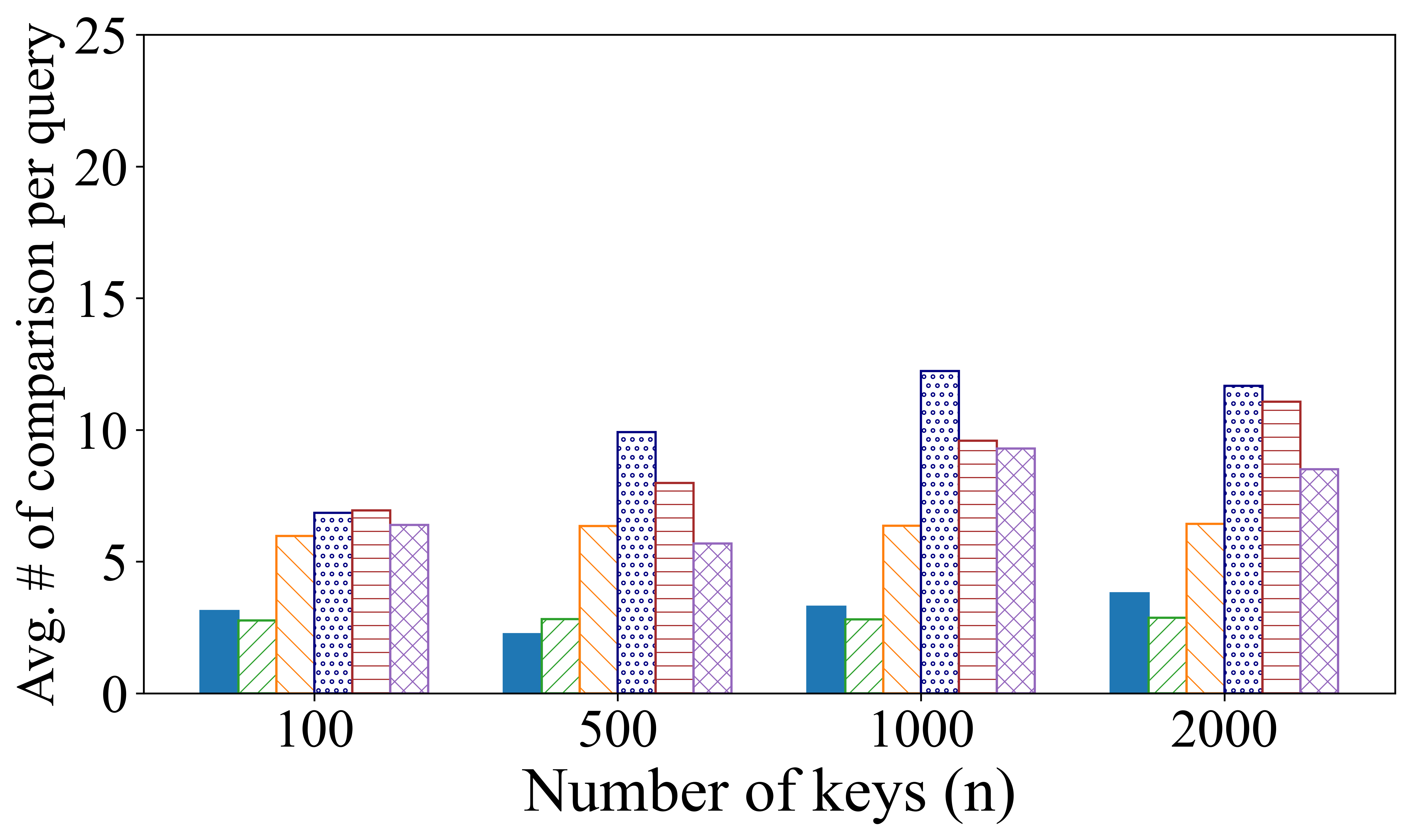}}\hspace{2mm}
	\subfigure{\includegraphics[width=0.32\textwidth, height=3.3cm]{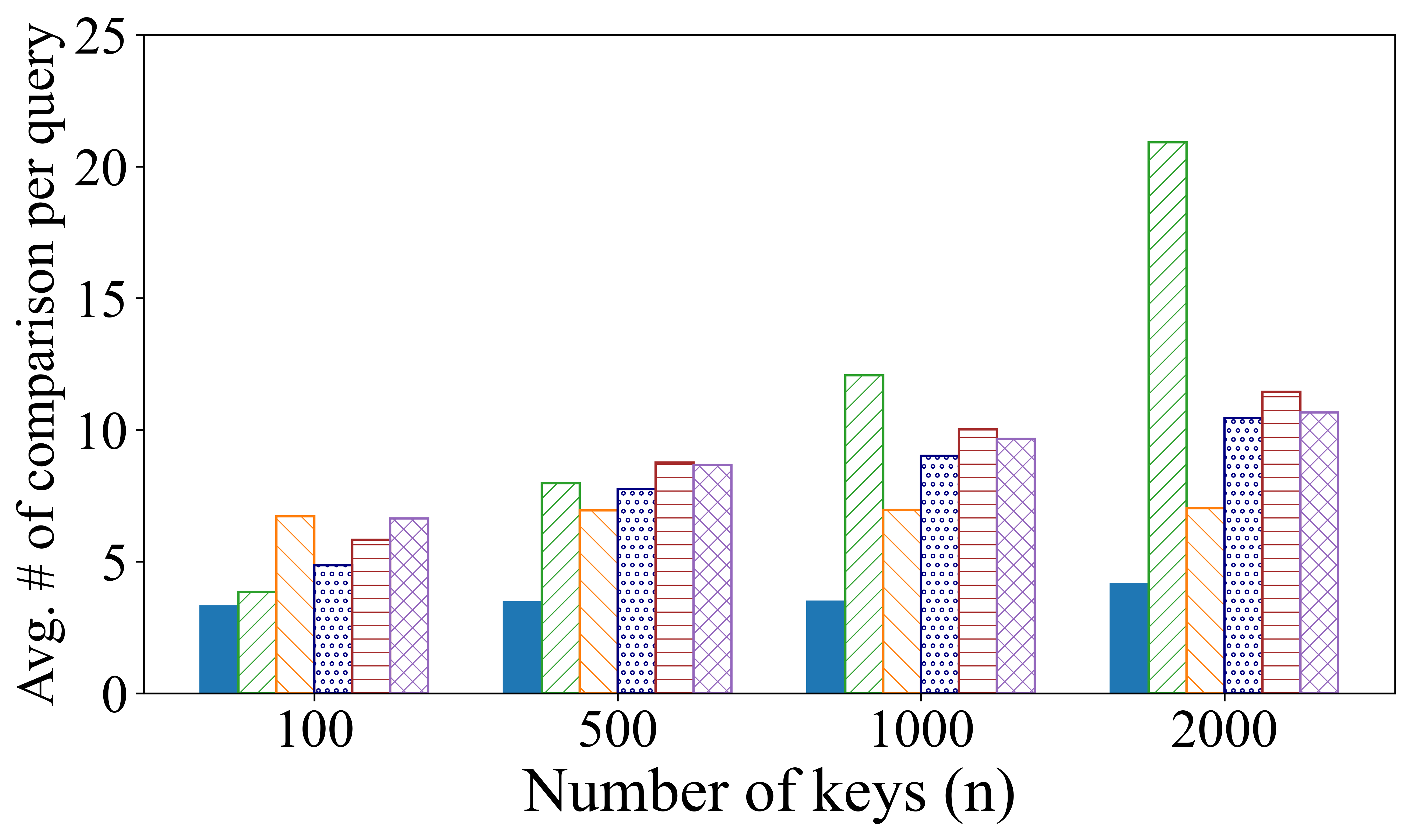}}\hspace{2mm}
    \subfigure{\includegraphics[width=0.32\textwidth, height=3.3cm]{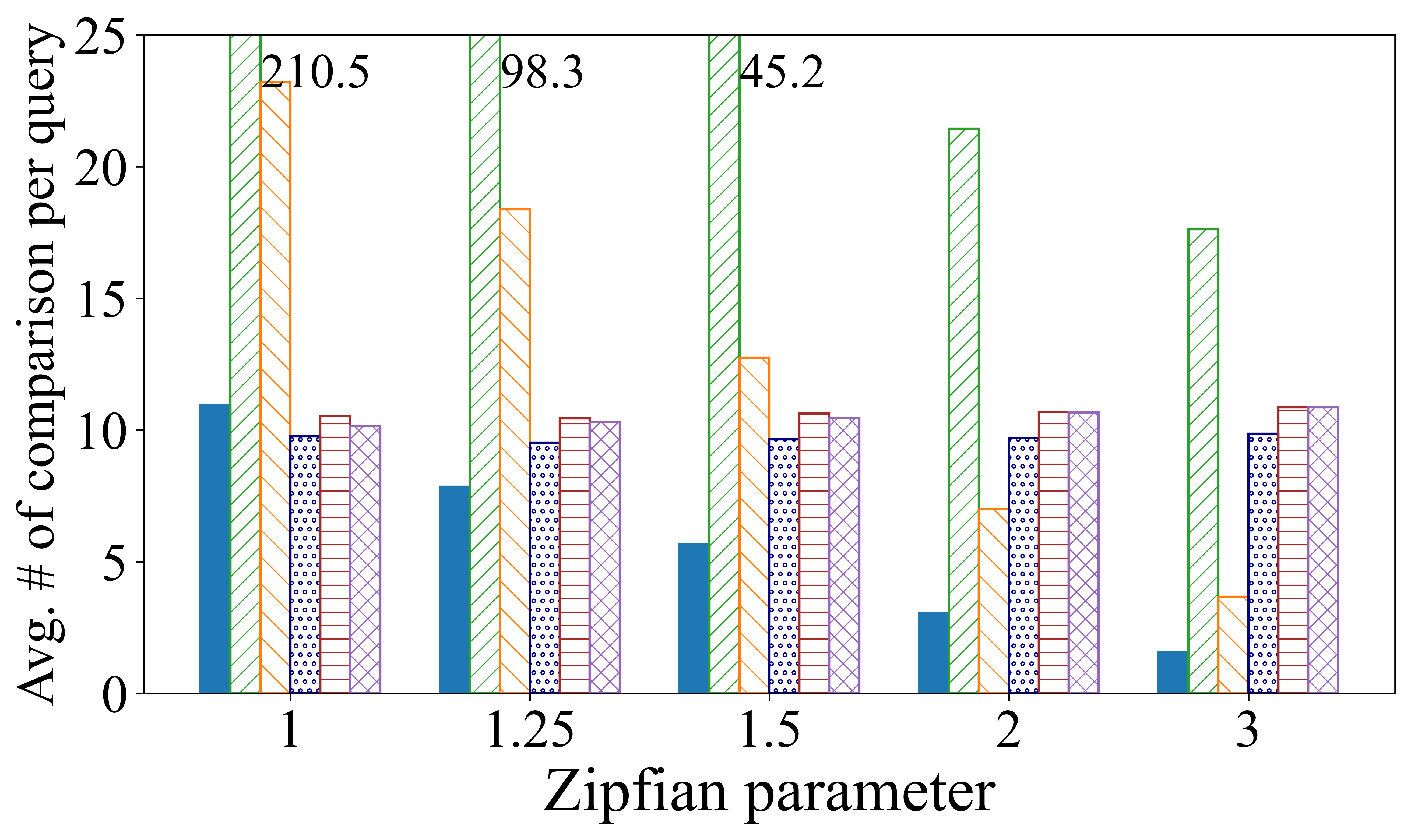}}
	\caption{Average number of comparisons per query of \DSName and baseline data structures for dynamic evaluations under (left) random frequency ordering with perfect predictions, (center) adversary frequency ordering with noisy predictions, and (right) adversary frequency ordering with different value of Zipfian parameter. The performance of learned treap is highly impacted by the prediction error, while \DSName shows its robustness against noisy predictions. To improve visibility, the y-axis range in the right figure is limited to 25, with actual values displayed next to bars exceeding this threshold for clarity.}
	\label{fig:DynamicExp}
\end{figure*}

\subsection{Discussion and Extensions}
\label{sec:ext}
\paragraph{Dynamics of \DSName.}

The results in the previous section assume a fixed value of $n$. The actual number of items, denoted by $n_t$, however, is clearly impacted by the insertions and deletions. We explain how to maintain the described consistency and robustness, using a value of $n$, which approximates $n_t$ and 

represents the parameter utilized by \DSName for classification (as depicted in Equation~\eqref{eq:classification}). 
We maintain a value of $n$ that is always larger than $n_t$. 
At the beginning, we set $n = 4$. After any insertion operation, if $n_t$ becomes equal to $n$, we update $n \leftarrow n^2$. After a deletion operation, if $n_t =  n^{1/4}$, we adjust $n \leftarrow \max(4, \sqrt{n})$.

This approach does not compromise the consistency of \DSName, as $n_t \leq n$, preserving the analysis outlined in Theorem~\ref{thm:rsl_consistency}. Additionally, $n$ remains bounded by $n < n_t^4$ for any $n_t > 1$, ensuring $\log n = \Oh(\log n_t)$ consistently. Consequently, the robustness of \DSName maintains at $\log n = \Oh(\log n_t)$.

By employing this technique, the amortized cost of insertion or deletion operations remains $\Oh(\log n)$ (finding the index of a key in the sorted ordering of all key values requires $\Omega(\log n)$ number of operations), as updates to \DSName occur only after at least $n^2$ consecutive insertions or deletions. It's important to note that the maximum cost for entirely reconstructing \DSName, given its maximum depth as $\Oh(\log n)$, is capped at $\Oh(n\log n)$.

\paragraph{Secondary queries.}
The order structure of the skip list allows efficient answering of secondary queries with little augmentation. 
First, we augment the deepest level of \DSName to make it doubly linked. 
We also add a pointer from any node $x$,  ${deep(x)}$, to the node with the same key as $x$ at the deepest level. 
This allows implementing \texttt{predecessor$(i)$}/\texttt{successor$(i)$} (keys before/after $i$ in the sorted order) by searching for $i$, finding it at some node $x$, following ${deep(x)}$, and probing a left/right pointer; the time complexity will be similar to the search.  
Similarly, \texttt{range}($i,j$) (reporting all keys $k$ s.t. $i\leq k \leq j$) can be done by searching for $i$ and following pointers in the deepest level. 
For other secondary queries, 
let $pred(x)$ be the node before $x$ at the same level. 
We let $x$ have an extra integer field $s(x)$ that specifies the number of keys that are ``skipped'' by following the right pointer from $pred(x)$ to $x$, that is, the number of keys in the dictionary between the key of $pred(x)$ and $i$. These augmentations allow efficient answering of \texttt{rank}($i$) (the index of $i$ in a sorted ordering of keys) by searching for $i$ and summing the values of $s(x)$ over nodes on the search path. Similarly, \texttt{select}$(t)$ (the $t$'th smallest key) can be done in $\mathcal{O}(\log n)$, summing over values $s(x)$ by following the right pointers to reach $t$. 
\begin{proposition}
    It is possible to augment \DSName to answer \texttt{predecessor}$(i)$, \texttt{successor}$(i)$, \texttt{rank}$(i)$, all in time proportional to $\texttt{Search}(i)$, \texttt{range}($i,j$) in time proportional to $\texttt{Search}(i) +z$, where $z$ is the output size, and \texttt{select}($t$) in $\Oh(\log n)$.
\end{proposition}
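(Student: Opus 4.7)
The plan is to reduce each secondary query to a standard search in $L(T_o,T_p)$ combined with constant-time manipulation of the augmented fields: the deepest-level down-pointers, the doubly-linked bottom list, and the skip counts $s(\cdot)$. For predecessor$(i)$ and successor$(i)$, I would first search for $i$ in $L(T_o,T_p)$, follow the down-pointer from the match to the deepest level, and then read off the left or right neighbor in $O(1)$ via the doubly linked bottom list. Range$(i,y)$ proceeds identically, except that after locating $i$ we walk rightward along the bottom level, emitting keys until one exceeds $y$; this contributes $O(k)$ additional work beyond the search.

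For rank$(i)$, I would piggyback on the same search and maintain a counter $R$ initialized to $0$. Whenever the search follows a right pointer from a node $M$ to its right neighbor $M'$, I would increase $R$ by $s(M') + 1$, accounting for the $s(M')$ dictionary keys strictly between the keys of $M$ and $M'$ together with the key of $M'$ itself. Descending via a down-pointer leaves $R$ unchanged. When the search terminates at the node whose key is $i$, $R$ equals the number of dictionary keys that are at most $i$, i.e., rank$(i)$. Since only $O(1)$ arithmetic is attached to each comparison of the underlying search, the total cost is $O(\texttt{Search}_{L(T_o,T_p)}(i))$.

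For select$(j)$, I would perform a top-down descent guided by the $s(\cdot)$ field, mirroring the order-statistic walk in a BST augmented with subtree sizes. Starting at the top-left sentinel with counter $C = 0$, at each step I examine the right neighbor $M'$ of the current node $M$: if $C + s(M') + 1 \le j$, I follow the right pointer and increase $C$ by $s(M') + 1$; otherwise I descend via the down-pointer. The walk terminates at the bottom level on the node whose key is the $j$-th smallest.

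The step I expect to require the most care is bounding the number of comparisons performed by the select descent by $O(\log n)$. For this I would replay the analysis of Theorem~\ref{th:mainForce}: at each level, the combined skip list interposes at most two candidate nodes between successive anchors (one inherited from $T_o$ and one from $T_p$), so at most two right-pointer examinations are made per level. Combined with the height bound of $O(\log n)$ inherited from the balanced pessimistic tree $T_p$, this yields the claimed $O(\log n)$ cost for select$(j)$, independent of how adversarially the predicted frequencies shape the optimistic tree $T_o$.
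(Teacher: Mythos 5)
Your proposal is correct and follows exactly the approach the paper intends: it uses the three augmentations the paper describes just before the proposition (down-pointers to the deepest level, a doubly-linked bottom list, and the skip counts $s(\cdot)$) and fills in the query procedures and the $\Oh(\log n)$ bound for select, which the paper leaves implicit. In particular, your observation that the select descent makes at most two right-pointer examinations per level, by the same structural argument as Theorem~\ref{th:mainForce}, combined with the $\Oh(\log n)$ height inherited from $T_p$, is the right way to justify the one bound that does not follow immediately from the search cost.
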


\begin{figure*}[!t]
	\centering
    \subfigure{\includegraphics[width=0.96\textwidth]{figures/Header.png}}\vspace{-4mm}
    \subfigure{\includegraphics[width=0.32\textwidth, height=3.3cm]{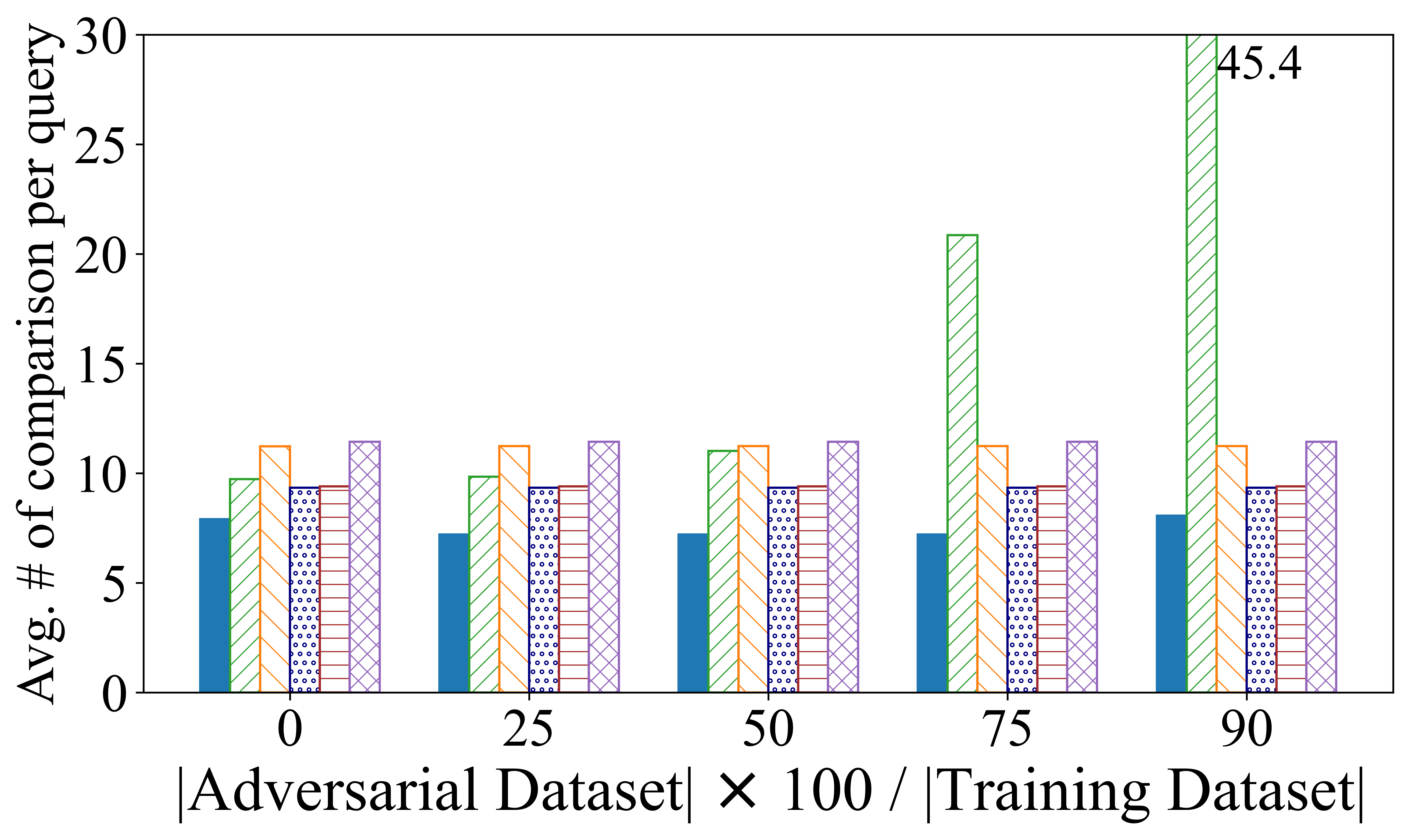}}\hspace{2mm}
    \subfigure{\includegraphics[width=0.32\textwidth, height=3.3cm]{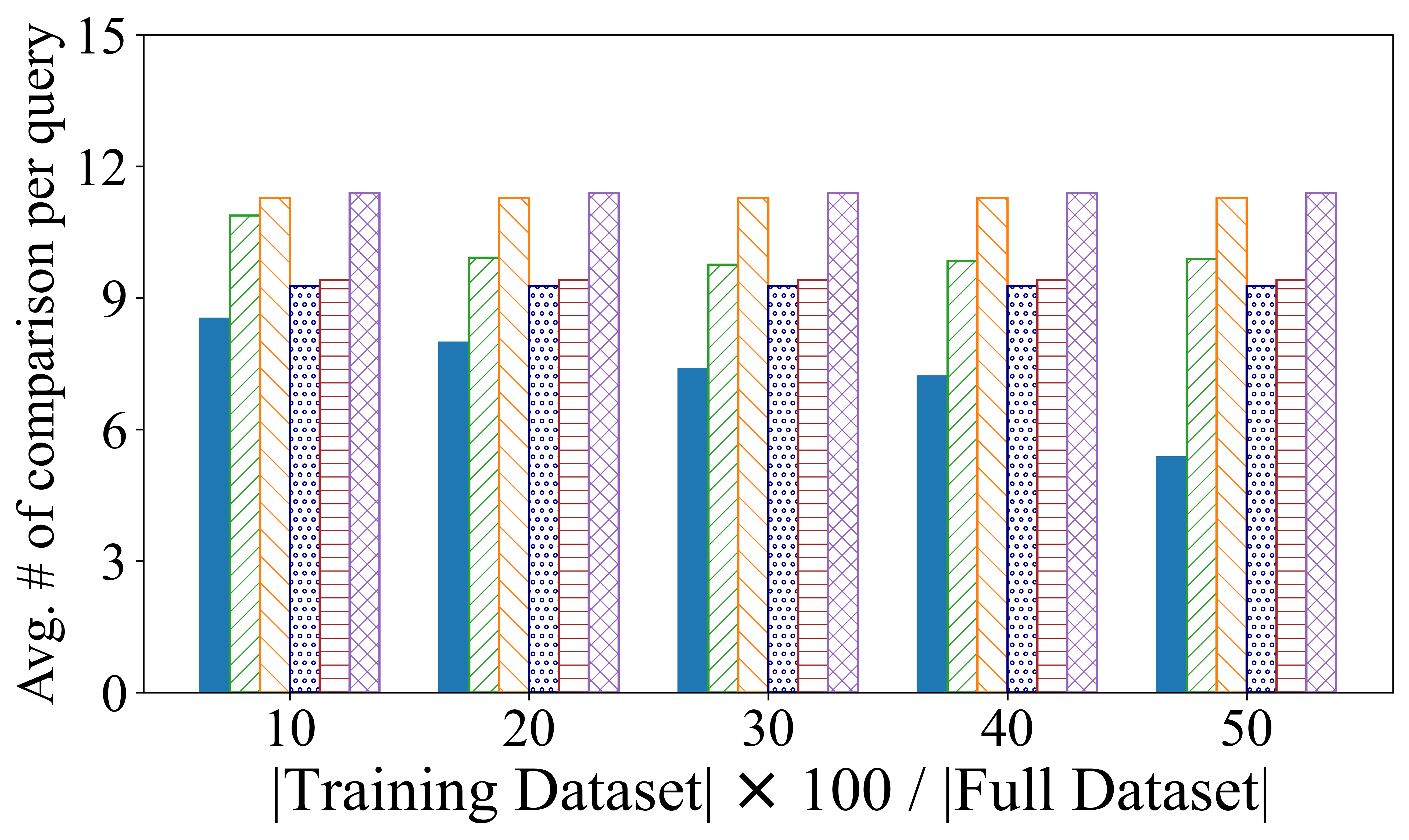}}\hspace{2mm}
	\subfigure{\includegraphics[width=0.32\textwidth, height=3.3cm]{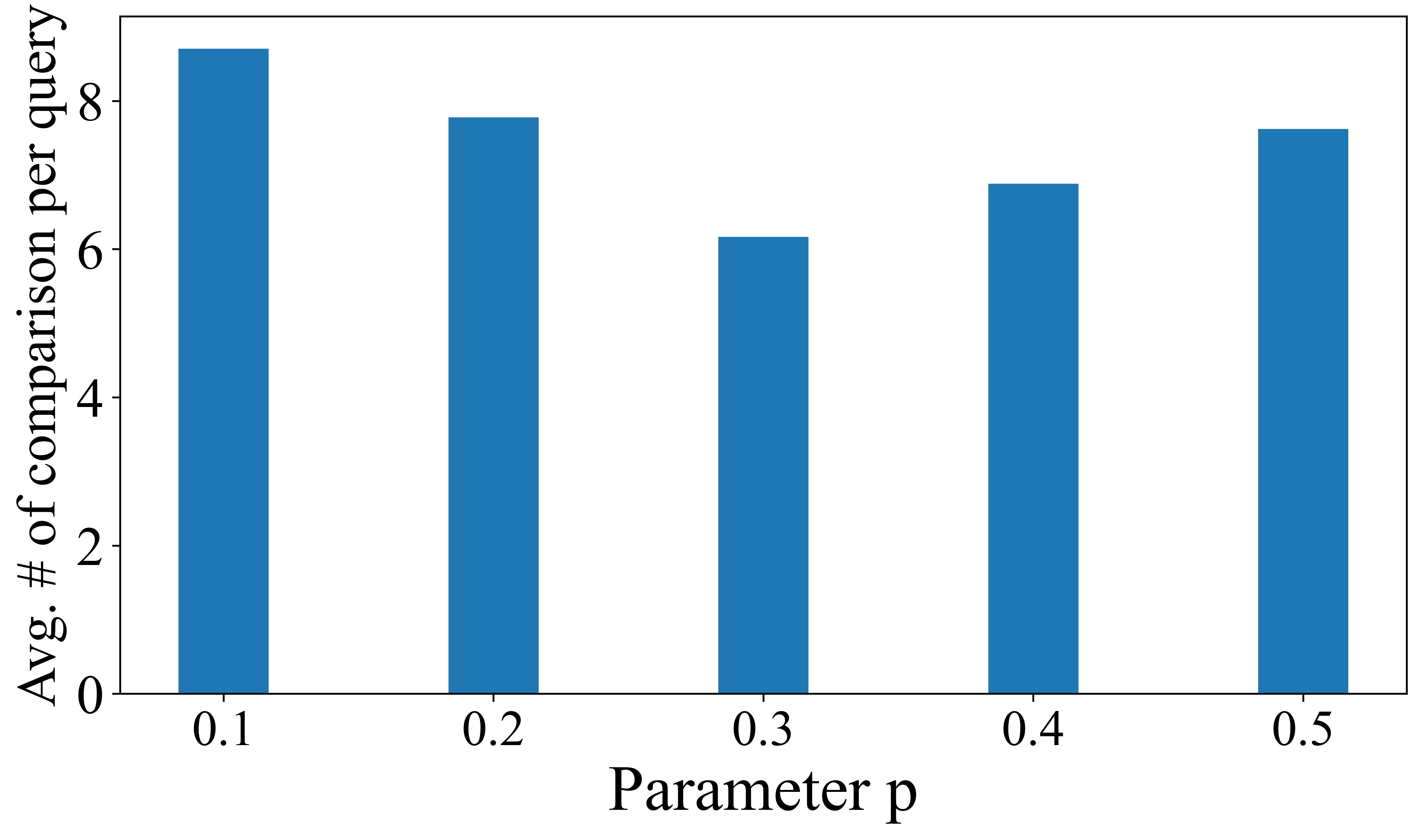}}
	\caption{Average number of comparisons per query of \DSName and the baseline data structures. The evaluation is conducted by varying three factors: (left) sizes of the adversarial dataset, (center) sizes of the training dataset, and (right) value of \pg in \DSName.  Notably, the performance of learning-augmented treaps demonstrates a significant increase when predicted frequencies are adversarial. }
	\label{fig:BBCExp}
\end{figure*}

\section{Experiments }\label{sect:exp}
 In this section, we evaluate the performance of \DSName in both static and dynamic settings and compare the performance with alternatives. Our goal is to investigate the robustness and consistency of \DSName under perfect and adversarial predictions. 

\subsection{Experimental Setup and Overview}
\label{sec:exp_setup}
We compare \DSName with several alternative data structures, namely AVL trees, red-black trees, splay trees, balanced BST (pessimistic BST), and learning-augmented treaps of~\citet{LinLWood22}.
We thoroughly evaluate the robustness and consistency of \DSName in static and dynamic settings by considering both perfect predictions and adversarial instances. Specifically, we conduct our experiments in two categories:
(1) dynamic random orders with perfect predictions, and (2) dynamic adversarial orders with noisy predictions. In addition, we provide further experiments concerning static dictionaries in Appendix~\ref{app:exp}. 
In the random order experiments, the frequency ranking of items are randomly selected concerning the key value ranking of items. In addition, we consider a perfect prediction in random order experiments to compare the \textit{consistency} of \DSName and baseline algorithms. On the other hand, the adversarial ranking is used to evaluate the \textit{robustness} of \DSName as compared to alternatives and includes perfect and noisy predictions. For each category, we conduct a series of 100 trials and report the average number of comparisons for a search query of \DSName and baseline data structures. More details on the prediction and error model are given in Appendix\S~\ref{app:exp}. 
We select $\pz = 0.05$ in \DSName for our experiments since it leads to slightly better performance, even though \DSName is robust to the selection of this parameter (as shown in Figure~\ref{fig:pz_impact} in Appendix\ref{app:exp_pz}). In addition, we select $\pg = 0.368$ that minimizes the consistency value presented in Theorem~\ref{thm:rsl_consistency}. 
Last, we use the following two categories of synthetic and real datasets.

\paragraph{Synthetic dataset.}
We conducted experiments to evaluate the performance of \DSName and baseline data structures using synthetic datasets of varying sizes. Specifically, we selected the number of unique keys, $n$, from the set of values $[100, 500, 1000, 2000]$. For each value of $n$, we generated $m = 100,000$ search queries with each item appearing according to the Zipfian distribution with a parameter of $2$. In addition, we test the performance of \DSName and other data structures against Zipfian distribution with different parameters. For random order experiments and experiments on testing different values of Zipfian distribution, we fix the number of items to $2000$. Our results demonstrate that the performance of learning-augmented treaps is much worse against the adversary frequency ordering using low-accuracy predicted frequencies. For a fair comparison between the performance of \DSName and the baselines data structures, in experiments with adversarial frequency predictions, we select the number of trials with high-accuracy predictions up to $99$ times of low-accuracy prediction trials. This means that in the experiments with ``adversarial'' frequency predictions, $99$ trials used high-accuracy predictions, and a single trial was designed using low-accuracy predictions (see Appendix \S\ref{app:exp} for details).

\paragraph{BBC news article dataset.} We also use the BBC news article dataset~\cite{KaggleBBC} to evaluate the performance of
\DSName and other baseline data structures in responding to news article queries~\cite{chenXi2022,kostakos2020strings}. In these experiments, we select a fraction of the entire dataset to predict item frequencies (training dataset) in the remaining portion (test dataset). To evaluate data structure performance under noisy conditions, we artificially generated additional articles using an adversarial approach. These adversarially-generated articles were designed to align ranking of frequencies with keys. In addition we test the impact of parameters $\pg$ and $\pz$ on the performance of \DSName against this dataset. We randomly select $40\%$ of the entire data as the training dataset, $25\%$ (of the training dataset) as the size of the adversary dataset, $\pz = 0.05$, and $\pg = 0.368$ (as used in synthetic experiments) and conducted experiments by varying one factor during each experiment while other factors remained fixed. Our analysis considers the top 5500 items with the highest frequencies for dictionary generation, and search queries were exclusively conducted on these items.

\subsection{Experimental Results}

\paragraph{Synthetic results.} Figure~\ref{fig:DynamicExp} presents the results of our experiments on dynamic structures using synthetic data, showcasing the average number of comparisons per query for both \DSName and baseline data structures under random and adversary settings. While the performance of learning-augmented treaps is notably impacted by error rates, \DSName demonstrates consistent and robust performance, validating its theoretical resilience. Specifically, when searching a dictionary of size $2000$, under random frequency ordering, the average number of comparisons of \DSName is $24.5\%$ more than of learning-augmented treaps under random frequency ordering while \DSName achieves $404.7\times$ lower comparison per query when the predicted frequencies were adversarial. Notably, we repeated the adversarial experiments with a higher number of items (e.g., $n = 1000$), the average number of comparisons per query slightly increased for all dictionaries, while for learned treaps, it increased to $788$, which confirms the vulnerability of these data structures to errors in prediction.
Finally, we replicated the same experiment in a static setting, and since the results were very similar to those obtained in the dynamic setting, we have included the results and additional analysis on static experiments in Appendix~\S\ref{app:exp}.

\paragraph{Dataset results.}
Figure ~\ref{fig:BBCExp} illustrates the average number of comparisons per query for the compared data structures, varying size of adversarial and training datasets, and $\pg$ in \DSName. Results align with theoretical analyses and synthetic dataset evaluations.
Learned treaps perform well when the prediction error is low (zero-size adversarial dataset), but their performance significantly deteriorates under fully adversarial conditions (i.e., when the size of the adversarial dataset is comparable to the training dataset). In contrast, \DSName demonstrates consistent performance and robustness throughout these experiments. When testing the impact of the training sample size, \DSName shows significant improvement in its performance (smaller number of comparisons) with a larger training dataset, while this improvement for learned treaps was much lesser.
Finally, {testing} the impact of $\pg$ shows that the performance of \DSName is optimized when $\pg$ is around $0.3$ supporting theoretical results. It is worth mentioning that our analysis reveals the impact of $\pz$ on the performance of \DSName is negligible. The result of this experiment is given in Appendix~\S\ref{app:exp}.

\section{Conclusion}

In this paper, we presented \DSName, a skip-list-based data structure 
that achieves optimality with quality predictions and stays robust with adversarial prediction.
As a prospect for future exploration, adjusting \DSName's structure to support the availability of partial information, particularly when frequency predictions are available only for a subset of items, presents an interesting avenue for future work.
Whether the guarantees provided by \DSName concerning consistency and robustness can be achieved with any learning-augmented BST is another open question for future study.

\section*{Acknowledgments}
The work of Ali Zeynali and Mohammad Hajiesmaili was supported by the U.S. National Science Foundation (NSF) under grant numbers CAREER-2045641, CPS-2136199, CNS-2106299, and CNS-2102963. Shahin Kamali was supported by the Natural Sciences and Engineering Research Council of Canada (NSERC) under grant number  DGECR-2018-00059.

\bibliographystyle{plainnat}
 \bibliography{references.bib}  

\begin{thebibliography}{23}
\providecommand{\natexlab}[1]{#1}
\providecommand{\url}[1]{\texttt{#1}}
\expandafter\ifx\csname urlstyle\endcsname\relax
  \providecommand{\doi}[1]{doi: #1}\else
  \providecommand{\doi}{doi: \begingroup \urlstyle{rm}\Url}\fi

\bibitem[Angelopoulos et~al.(2020)Angelopoulos, D{\"{u}}rr, Jin, Kamali, and
  Renault]{DBLP:conf/innovations/0001DJKR20}
S.~Angelopoulos, C.~D{\"{u}}rr, S.~Jin, S.~Kamali, and M.~P. Renault.
\newblock {Online Computation with Untrusted Advice}.
\newblock In \emph{Proc. {ITCS}}, pages 52:1--52:15, 2020.

\bibitem[Angelopoulos et~al.(2022)Angelopoulos, Kamali, and
  Shadkami]{ijcaiKS22}
S.~Angelopoulos, S.~Kamali, and K.~Shadkami.
\newblock {Online Bin Packing with Predictions}.
\newblock In \emph{Proc. {IJCAI}}, pages 4574--4580. ijcai.org, 2022.

\bibitem[Bagchi et~al.(2005)Bagchi, Buchsbaum, and Goodrich]{BagchiBG05}
Amitabha Bagchi, Adam~L. Buchsbaum, and Michael~T. Goodrich.
\newblock {Biased Skip Lists}.
\newblock \emph{Algorithmica}, 42\penalty0 (1):\penalty0 31--48, 2005.

\bibitem[Bayer and McCreight(1972)]{BayerM72}
R.~Bayer and E.~M. McCreight.
\newblock {Organization and Maintenance of Large Ordered Indices}.
\newblock \emph{Acta Informatica}, 1:\penalty0 173--189, 1972.

\bibitem[Bose et~al.(2008)Bose, Dou{\"{\i}}eb, and Langerman]{BoseDL08}
P.~Bose, K.~Dou{\"{\i}}eb, and S.~Langerman.
\newblock {Dynamic optimality for skip lists and B-trees}.
\newblock In \emph{Proc. SODA}, pages 1106--1114. {SIAM}, 2008.

\bibitem[Cao et~al.(2023)Cao, Chen, Chen, Lambert, Peng, and
  Sleator]{ChenChen2023V2}
X.~Cao, J.~Chen, L.~Chen, C.~Lambert, R.~Peng, and D.~Sleator.
\newblock {Learning-Augmented B-Trees}.
\newblock 2023.
\newblock \doi{10.48550/arXiv.2211.09251}.
\newblock URL \url{https://doi.org/10.48550/arXiv.2211.09251}.

\bibitem[Chen et~al.(2022)Chen, Zeynali, Camargo, Fl{\"o}ck, Gaffney,
  Grabowicz, Hale, Jurgens, and Samory]{chenXi2022}
Xi~Chen, Ali Zeynali, Chico Camargo, Fabian Fl{\"o}ck, Devin Gaffney,
  Przemyslaw Grabowicz, Scott Hale, David Jurgens, and Mattia Samory.
\newblock {S}em{E}val-2022 task 8: Multilingual news article similarity.
\newblock In \emph{Proceedings of the 16th International Workshop on Semantic
  Evaluation (SemEval-2022)}, pages 1094--1106, Seattle, United States, July
  2022. Association for Computational Linguistics.
\newblock URL \url{https://aclanthology.org/2022.semeval-1.155}.

\bibitem[Dean and Jones(2007)]{DeanJ07}
Brian~C. Dean and Zachary~H. Jones.
\newblock {Exploring the duality between skip lists and binary search trees}.
\newblock In David John and Sandria~N. Kerr, editors, \emph{Proc. {ACM}
  Southeast Regional Conference}, pages 395--399, 2007.

\bibitem[Knuth(1971)]{Knuth71}
Donald~E. Knuth.
\newblock {Optimum Binary Search Trees}.
\newblock \emph{Acta Informatica}, 1:\penalty0 14--25, 1971.

\bibitem[Kostakos(2020)]{kostakos2020strings}
Panos Kostakos.
\newblock {Strings and things: a semantic search engine for news quotes using
  named entity recognition}.
\newblock In \emph{{2020 IEEE/ACM International Conference on Advances in
  Social Networks Analysis and Mining (ASONAM)}}, pages 835--839. IEEE, 2020.

\bibitem[Kushwaha(2023)]{KaggleBBC}
Shivam Kushwaha.
\newblock {BBCFull Text Document Classification}.
\newblock
  \url{https://www.kaggle.com/datasets/shivamkushwaha/bbc-full-text-document-classification},
  2023.
\newblock Accessed: 2023-05-17.

\bibitem[Li et~al.(2022)Li, Yang, Qu, Shi, Yu, Wierman, and Low]{LiYQSYWL22}
T.~Li, R.~Yang, G.~Qu, G.~Shi, C.~Yu, A.~Wierman, and S.~H. Low.
\newblock {Robustness and Consistency in Linear Quadratic Control with
  Untrusted Predictions}.
\newblock In \emph{Proc. {SIGMETRICS}}, pages 107--108. {ACM}, 2022.

\bibitem[Lin et~al.(2022)Lin, Luo, and Woodruff]{LinLWood22}
H.~Lin, T.~Luo, and D.~P. Woodruff.
\newblock {Learning Augmented Binary Search Trees}.
\newblock In \emph{Proc. {ICML}}, volume 162, pages 13431--13440. {PMLR}, 2022.

\bibitem[Mehlhorn(1975)]{Mehlhorn75}
K.~Mehlhorn.
\newblock {Nearly Optimal Binary Search Trees}.
\newblock \emph{Acta Informatica}, 5:\penalty0 287--295, 1975.

\bibitem[Mehlhorn and N{\"{a}}her(1992)]{MehlhornN92}
K.~Mehlhorn and S.~N{\"{a}}her.
\newblock {Algorithm Design and Software Libraries: Recent Developments in the
  {LEDA} Project}.
\newblock In \emph{Proc. {IFIP}}, volume {A-12}, pages 493--505. North-Holland,
  1992.

\bibitem[Messeguer(1997)]{Messeguer97}
X.~Messeguer.
\newblock {Skip Trees, an Alternative Data Structure to Skip Lists in a
  Concurrent Approach}.
\newblock \emph{{RAIRO} Theor. Informatics Appl.}, 31\penalty0 (3):\penalty0
  251--269, 1997.

\bibitem[Munro et~al.(1992)Munro, Papadakis, and Sedgewick]{MunroPS92}
J.~I. Munro, T.~Papadakis, and R.~Sedgewick.
\newblock {Deterministic Skip Lists}.
\newblock In G.~N. Frederickson, editor, \emph{Proc. {SODA}}, pages 367--375.
  {ACM/SIAM}, 1992.

\bibitem[Pugh(1990)]{Pugh90}
W.~W. Pugh.
\newblock {Skip Lists: {A} Probabilistic Alternative to Balanced Trees}.
\newblock \emph{Commun. {ACM}}, 33\penalty0 (6):\penalty0 668--676, 1990.

\bibitem[Rubini and Corbet(2001)]{Rubini}
A.~Rubini and J.~Corbet.
\newblock \emph{{Linux device drivers - covers Linux 2.4 {(2.} ed.)}}.
\newblock O'Reilly, 2001.

\bibitem[Seidel and Aragon(1996)]{SeidelA96}
Raimund Seidel and Cecilia~R Aragon.
\newblock {Randomized Search Trees}.
\newblock \emph{Algorithmica}, 16\penalty0 (4-5):\penalty0 464--497, 1996.

\bibitem[Sleator and Tarjan(1985)]{SleatorT85}
D.~D. Sleator and R.~E. Tarjan.
\newblock {Self-Adjusting Binary Search Trees}.
\newblock \emph{J. {ACM}}, 32\penalty0 (3):\penalty0 652--686, 1985.

\bibitem[Sun et~al.(2021)Sun, Lee, Hajiesmaili, Wierman, and
  Tsang]{sun2021pareto}
Bo~Sun, Russell Lee, Mohammad Hajiesmaili, Adam Wierman, and Danny Tsang.
\newblock {Pareto-optimal learning-augmented algorithms for online conversion
  problems}.
\newblock \emph{Advances in Neural Information Processing Systems},
  34:\penalty0 10339--10350, 2021.

\bibitem[Zeynali et~al.(2021)Zeynali, Sun, Hajiesmaili, and
  Wierman]{Zeynali0HW21}
A.~Zeynali, B.~Sun, M.~H. Hajiesmaili, and A.~Wierman.
\newblock {Data-driven Competitive Algorithms for Online Knapsack and Set
  Cover}.
\newblock In \emph{Proc. AAAI}, pages 10833--10841. {AAAI} Press, 2021.

\end{thebibliography}




\newpage
\appendix
\onecolumn
\section{Proofs of Theoretical Result}
\label{app:proofs}
\subsection{Proof of Proposition~\ref{prop:ICML}}
\label{app:learned_treap_performance}
\begin{proof}
Consider defining the predictions $\bm{\hat{f}}$ in a way that all items are predicted to have a frequency that is almost $1/n$ with a small additive factor that ensures item $i$ has rank $i$. More precisely, frequency of access to item $i$ is $\hat{f_i} = 1/n + \epsilon_i$ ($\epsilon_i$ could be negative), where $\epsilon_i < \epsilon_j$ for any $i < j$, and $\sum_i \epsilon_i = 0$ and $|\epsilon_i| \ll 1/n$. Given that the key of each item equals its predicted rank, the learning augmented treap resembles a single path, and its height and robustness are linear to $n$. Even when predictions are accurate, treap's total cost for accesses to item $n$, which is $m \hat{f_n}$, is at least $ m/n$, and the total cost for item $i$ is $m \hat{f_i} = m \cdot i (1/n + \epsilon_i)$. The amortized cost for a single request is thus lower bounded by $\sum\limits_{i=1}^n i\ (1/n+\epsilon_i) =  \Omega(n).$
The cost of the treap for $m$ access operations is thus $\Omega(mn)$, while the entropy of $\bm{\hat{f}}$ is $O(\log n)$. We can conclude that the consistency of the learned treap is $\Omega(n /\log n)$.
\end{proof}

\subsection{Proof of Proposition~\ref{prop:chenchen}}
\label{app:chenchen_proof}
\begin{proof}
    Optimal consistency is proven in [Theorem 4.8]~\cite{ChenChen2023V2}. To provide a lower bound for robustness, consider a highly skewed distribution for the predicted frequency predictions such that $\log \log \hat{f}_i \geq \log \log \hat{f}_{i-1}+1$ for any $i\in [n]$.
As a result, the random component of the priority, which is in (0,1), does not make any difference in the rank of items in the resulting tree. That is, the item with key $i$ will have rank $i$, and the learned treap will be highly unbalanced, resembling a linear list and therefore, its robustness is $n$.
\end{proof}

\subsection{Proof of Lemma~\ref{lem:cond_low_prob}}
\label{app:proof_low_prob}
   \begin{proof}
       Any arbitrary item from class $c$ can replicate at least \BaseClass{c-1} times only if the number of additional replications of item $i$, due to stochastic replications, exceeds a certain lower bound:
    \begin{equation*}
       \mathcal{D}(c) - \mathcal{D}(c-1) = \frac{\log \pz}{\log \pg}2^c \leq \mathcal{X}(\pg). 
    \end{equation*}
    The probability of this event is less than $\pg^{\frac{\log( \pz)}{\log(\pg)} {2^{c}}}$. Moreover, the maximum number of items in class ${c}$ is $\pz^{-2^{{c}}}$, given that the predicted frequency of items in class ${c}$ is at least $\pz^{2^{c}}$. Consequently, the expected number of items in class ${c}$ that may potentially replicate at least \BaseClass{c-1} times, denoted as $\Exp[\text{Vio}_{c}]$, is bounded by:
    \begin{subequations}
    \begin{align*}
       \Exp[\text{Vio}_c] \leq \pg^{{\log( \pz)}2^c/{\log(\pg)}{}} \cdot \pz^{-2^{c}} \leq 1.
    \end{align*}
\end{subequations}
   \end{proof}

\subsection{Proof of Lemma~\ref{lem:shahinMain}}
\label{app:proof_shahinMain}
\begin{proof}
    We partition the linked lists in \DSName into $K$ \emph{layers}, one layer for each class $c$ as follows. 
 The \emph{layer} of class $0$ is formed by the lists with depth at most $\mathcal{D}(0)$, and for any $c\in [1,K]$, the layer of class $c$ is formed by lists at depth in the range $(\mathcal{D}(c-1),\mathcal{D}(c)]$. Since every item of class $c$ is replicated in at least $\mathcal{D}(K)  - \mathcal{D}(c)$ lists, it appears in layers of all classes $\geq c+1$. 

To bound the number of comparisons for accessing an item $i$, we devise an upper bound for the number of comparisons at any layer $c\leq c_i$.
 Comparisons at layer $c$ involve items that are replicated at $\mathcal{D}(c)$ but not at $\mathcal{D}(c-1)$. This is because items replicated at $\mathcal{D}(c-1)$ have been already examined in previous layers, and upon reaching layer $c$, the search domain is restricted to 2 consecutive items among them. Therefore, all comparisons at layer $c$ involve items in class $c$ as well as items from classes $\geq c+1$ that are replicated at $\mathcal{D}(c)$. By Lemma~\ref{lem:cond_low_prob}, the number of these latter items is at most $1$, on expectation. To conclude, at layer $c$, we have at most $N_c + 1$ items with a replica at level $\mathcal{D}(c)$, each having further replicas following a geometric distribution with parameter $p$. In other words, they resemble a skip list, and searching among them takes at most $\frac{\log (N_c+1)}{p \log{(1/p)}} < \frac{2 \log (N_c)}{p \log{(1/p)}} $, as in a regular skip list~\cite{Pugh90}.

Therefore, for the total search cost, we can write
\begin{align}
\label{eq:exp_search_cost}
 \Exp[\texttt{Search}_{\DSName}(i)]  &=  \frac{2}{p \log{(1/p)}}  \sum_{c \in [0,c_i]}{\log (N_{c})} 
 \\ & = \frac{2}{p \log{(1/p)}} \sum_{[0,c_i]}(-2^{c} \log \theta)  
 \\ & < \frac{2\log \theta}{p \log{p}} 2^{c_i+1}.   
\end{align}
When $c \leq K-1$, the right-hand side of the above inequality is at most $\frac{4}{p \log{p}} \log(\hat{f}_i)$, which follows directly from item classification.

When $c=K$, the total number of comparisons is the sum of comparisons in the layers above $K$ and the number of comparisons at layer $K$. The first term is at most $$ \frac{2\log \theta}{p \log{p}} \cdot  2^{K} \leq \frac{2\log \theta}{p \log{p}} \cdot  2(\log n + \log \theta / \log p); $$ the first inequality follow from Equation~\eqref{eq:exp_search_cost}, applying $c_i = K-1$, and the second inequality follows from the definition of $K$.  Given that there are at most $n$ items in class $K$, the number of comparisons at layer $K$ is at most $\frac{\log n}{- p \log p}$ (as in a regular skip list). Therefore, the total number of comparisons is at most 

\begin{align*}
   \leq&  \frac{2\log \theta}{p \log{p}} 2(\log n + \log \theta / \log p) + \frac{\log n}{-p \log p} \\
   =& \frac{4\log \theta-1}{p \log p} (\log n) + \mathcal{O}(1).
\end{align*} 
\end{proof}

\subsection{Proof of Theorem~\ref{thm:rsl_consistency}}
\label{app:proof_rsl_consistency}
\begin{proof}
    Suppose $\bm{f} = \bm{\hat{f}}$. According to Lemma~\ref{lem:shahinMain}, number of comparison made while searching an item $i$ in class $K$ is at most 
$$\frac{4\log \theta -1}{p \log p} (\log n) + \mathcal{O}(1) \leq \frac{4\log f_i}{p \log p};$$ 
the second inequality from Equation~\eqref{eq:classification}. In addition, again by Lemma~\ref{lem:shahinMain}, the number of comparisons made while searching an item $i$ of class $c_i \leq K-1$ is bounded by $\frac{4\log f_i}{p \log p}$. Therefore, the total number of comparisons for the input sequence is at most $\sum_i f_i \cdot \frac{4\log f_i}{p \log p}$, which ensures a consistency of at most $\frac{-4}{\pg\log(\pg)}$.
\end{proof}

\subsection{Proof of Theorem~\ref{thm:rsl_robustness}}
\label{app:proof_rsl_robustness}
\begin{proof}
    According to Lemma~\ref{lem:shahinMain} the maximum number of comparison made while searching any item is
    \begin{equation*}
        \max_{i} \text{Search}_{\DSName}(i) = \max \bigg(\frac{4}{p \log p} \log(\hat{f_i}), \frac{4\log \theta - 1}{p \log p}(\log n) \bigg) = \Oh(\log n).
    \end{equation*}
    Which shows that \DSName provides robustness of $\Oh(\log n)$, optimal robustness.
\end{proof}

\section{Additional Details of the Experiments}
\label{app:exp}
In this section, we provide additional details of the experimental setup.

Error models play a critical role in the performance of predicted-based data structures such as \DSName or learning-augmented treaps. In \cite{LinLWood22}, the authors define an error as the uncertainty in the frequency of item $i$. Specifically, the noisy predicted frequency of item $i$, $\hat{f_i}$, lies in the range $1/\Delta f_i \leq \hat{f_i} \leq \Delta f_i$, where $\Delta$ is a constant and representative of the error value. However, this error model suffers from the non-linear impact on the performance of predicted-based data structures. Precisely, for the mid-range error values, if $f_i \leq f_j$ denotes the frequency of item $i$ and item $j$, then the expected rank of item $i$ in the predicted ranks is also less than the expected rank of item $j$. As a result, even predictions with mid-range error values slightly lower the performance of data structures like learning-augmented treaps that consider the frequency rankings instead of individual frequency values of items.

To address this challenge, we consider an error model that swaps the ranks of high and low-ranked items for predictions with very high error values. Let $\delta_e$ be a metric that measures the accuracy of frequency predictions, where $\delta_e=0$ indicates perfectly predicted frequencies and $\delta_e=1$ denotes fully adversarial predicted frequencies. Using this error model, the predicted rank of item $i$ is given by $\hat{r_i} = r_i \times (1-\delta_e) + \delta_e \times (n - r_i + 1)$. In other words, the fully adversarial prediction model mirrors the rankings with respect to the median item. During our experiments, we use $\delta_e = 0.01$ as the high-accuracy prediction and $\delta_e = 0.9$ as the low-accuracy prediction.

Existing data structures often exhibit sensitivity to the order of queries, impacting the efficiency of search operations and other related tasks. For instance, splay trees demonstrate varying performance based on the order in which search queries are executed, while the structure of learned treaps is highly influenced by the frequency ordering of elements. It is our expectation that learned treaps will perform well in instances with perfect predictions due to their static optimality, while balanced BSTs will showcase robustness against fully adversarial inputs.

Consequently, we propose an error model that generates fully adversarial input sequences for prediction-based data structures like learned treaps and optimistic optimal binary search trees. These fully adversarial input sequences arise when high-frequency items are swapped with low-frequency items in the prediction result. By incorporating such challenging inputs into our experiments, we aim to assess the optimal robustness and consistency of \DSName under diverse conditions.

\textbf{Dynamic operations during experiments.} Experiments on the synthetic dataset involve insert and delete queries. $80\%$ of the items are randomly selected to be included in the initial data structure, and the ratio of insert/delete queries to search queries is $20\%$. Furthermore, insert and delete queries are randomly interleaved with search queries. The experiments are conducted using two methods: random order with perfect predictions ($\delta_e =0$) and adversarial ranking with noisy predictions.


\subsection{Result of experiments under static setting}
In the static setting of Figure~\ref{fig:staticExp}, the frequency ordering of elements is a crucial factor that can significantly impact the performance of data structures, such as learning-augmented treaps, which take it into account. When the items are ordered according to the adversarial frequencies, the performance of learning-augmented treap is severely affected, especially for a large number of unique keys, even though only $1\%$ of tests included low-accuracy predictions. Results show that testing the dictionary of size $n = 2000$, the average number of comparisons of \DSName is $8.7\%$ more than of learning-augmented treaps under random frequency ordering while \DSName achieves $673.7\times$ lower comparison per query when the predicted frequencies were adversarial. Finally, the result of testing the impact of the Zipfian parameter on the performance of data structures in the static setting is consistent with the result of the same analysis under dynamic conditions.

\begin{figure*}[t]
	\centering
    \subfigure{\includegraphics[width=0.98\textwidth]{figures/Header.png}}\vspace{-4mm}
    \subfigure{\includegraphics[width=0.32\textwidth, height=3.3cm]{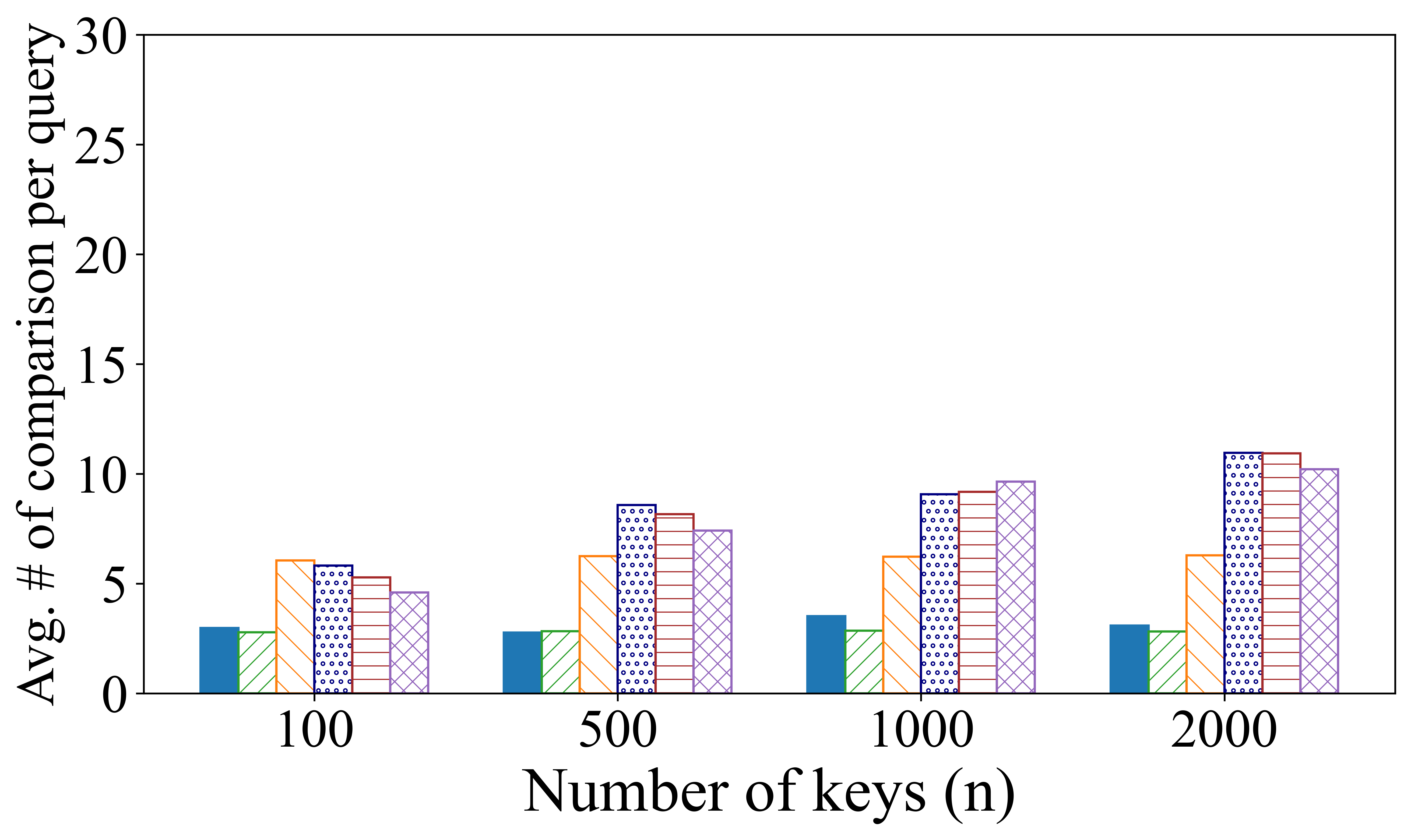}}\hspace{2mm}
	\subfigure{\includegraphics[width=0.32\textwidth, height=3.3cm]{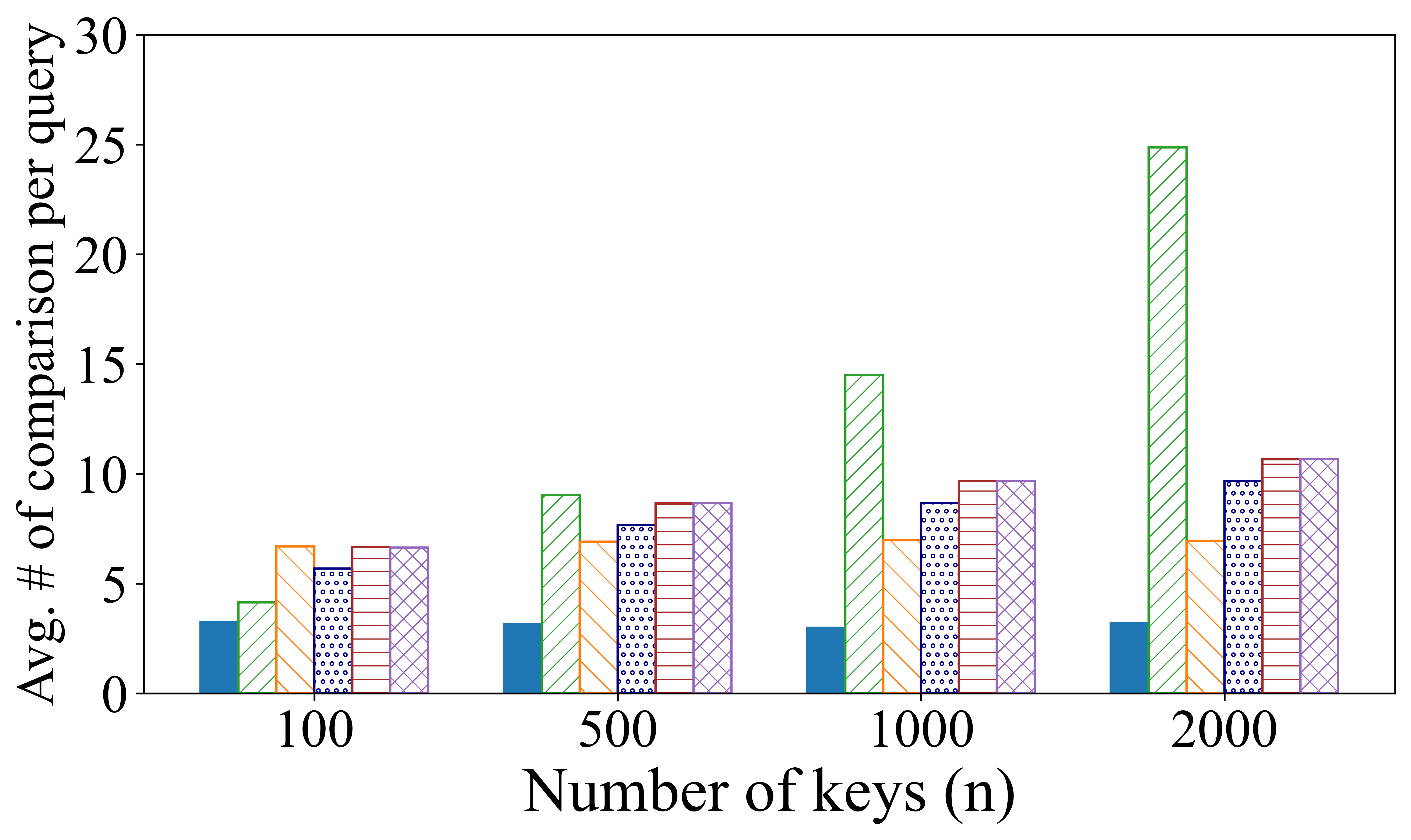}}\hspace{2mm}
    \subfigure{\includegraphics[width=0.32\textwidth, height=3.3cm]{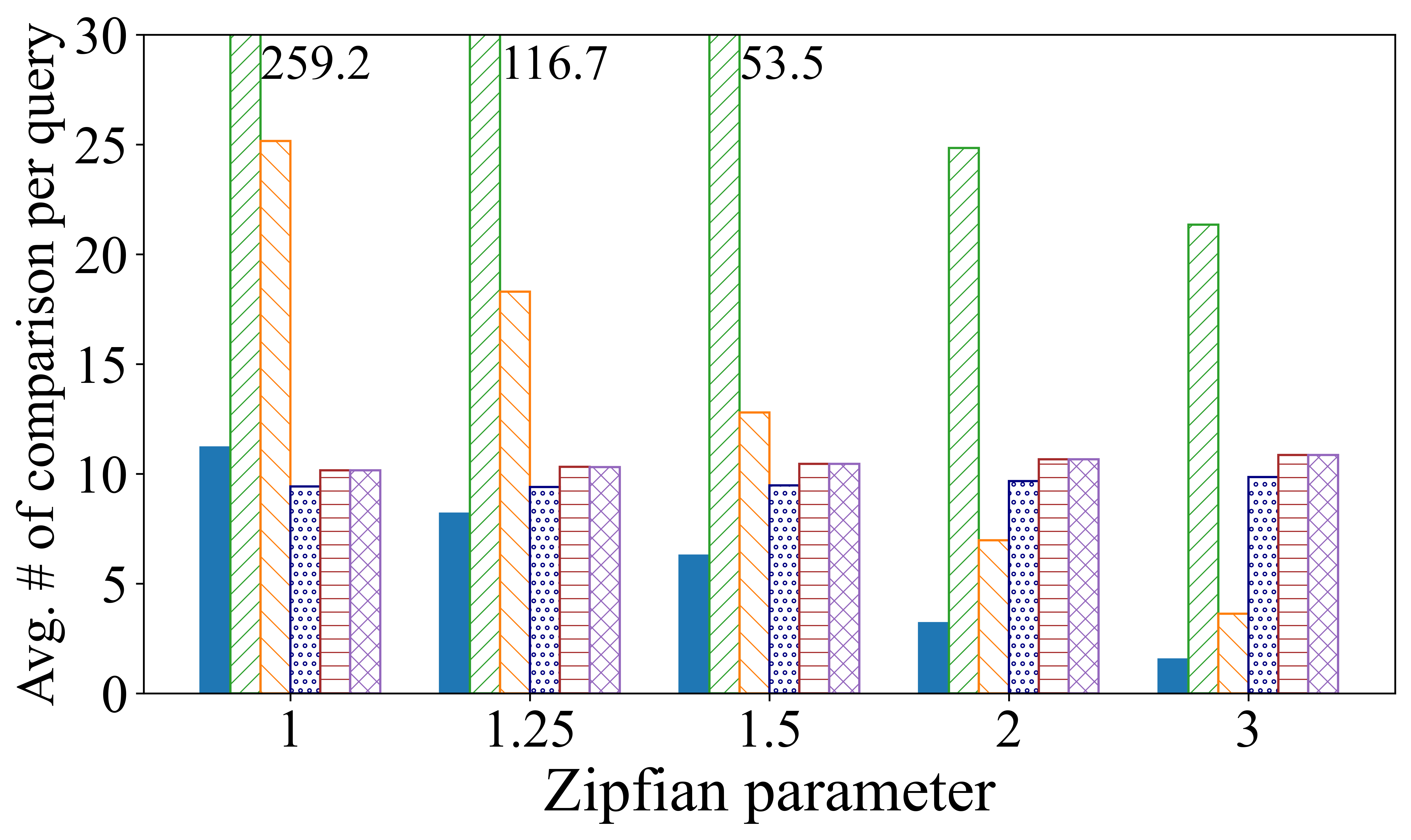}}\vspace{-2mm}
	\caption{Average number of comparisons per query for \DSName and baseline data structures for static evaluations under (left) random frequency ordering with perfect predictions, (center) adversarial frequency ordering with noisy predictions, and (right) adversarial frequency ordering with different value of Zipfian parameter. The frequency ordering of items impacts the performance of learning-augmented treaps while \DSName shows its robustness against different conditions. To improve visibility, the y-axis range in the right figure is limited to 30, with actual values displayed next to bars exceeding this threshold for clarity.}
	\label{fig:staticExp}
    \vspace{-3mm}
\end{figure*}


\subsection{Testing the impact of $\pz$ on performance of \DSName}
\label{app:exp_pz}
Figure~\ref{fig:pz_impact} shows the impact of $\pz$ on the performance of \DSName against the real world dataset (BBC news article dataset). Similar to experiments conducted in Section~\ref{sec:exp_setup}, $0.368$ was selected for parameter $\pg$, $40\%$ of the dataset used as a training dataset, and the size of the adversarial dataset were $25\%$ of the size of the training dataset. The results shows that value of parameter $\pz$ cannot significantly affect the performance of \DSName.

\begin{figure*}[!t]
	\centering
    \includegraphics[width=0.4\textwidth]{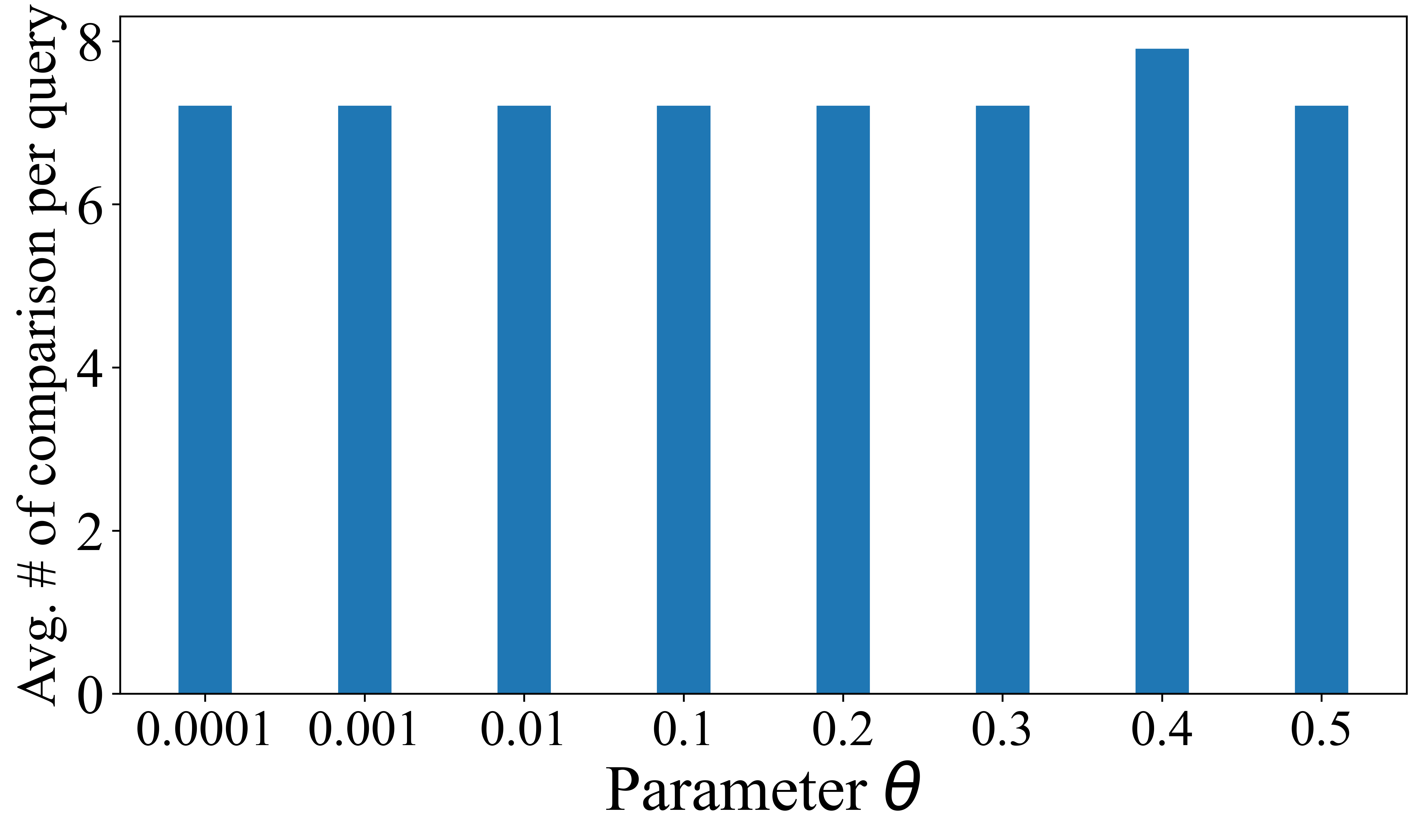}
	\caption{Average number of comparisons per query applied by \DSName as a function of $\pz$ against real world dataset. $\pz$ shows negligible impact on the performance of \DSName.}
	\label{fig:pz_impact}
    \vspace{-3mm}
\end{figure*}

\end{document}
